\newcommand{\Tr}{\mathrm{Tr}}
\newcounter{defin}  \newcounter{lemma}  \newcounter{theorem}
\newcounter{property} \newcounter{corol}  \newcounter{remark} \newcounter{example}
\newenvironment{lemma}{\par\refstepcounter{lemma}
     \textbf{Lemma \thelemma.} }{\rm\par}
\newenvironment{theorem}{\par\refstepcounter{theorem}
     \textbf{Theorem \thetheorem.}\ }{\rm\par}
\newenvironment{property}{\par\refstepcounter{property}
     \textbf{Proposition \theproperty.}\ }{\rm\par}
\newenvironment{corollary}{\par\refstepcounter{corol}
     \textbf{Corollary \thecorol.} }{\rm\par}
\newenvironment{definition}{\par\refstepcounter{defin}
     \textbf{Definition \thedefin.}\ }{\rm\par}
\newenvironment{remark}{\par\refstepcounter{remark}
     \textbf{Remark \theremark.}}{\rm\par}
\newenvironment{example}{\par\refstepcounter{example}
     \textbf{Example \theexample.}}{\rm\par}
\def\bra #1{\langle #1\vert}
\def\ket #1{\vert #1\rangle}
\def\ketbra #1#2{\vert #1\rangle \langle #2\vert}
\def\Id{\mathop{\rm Id}\nolimits}
\def\Ran{\mathop{\rm Ran}\nolimits}
\def\overco{\mathop{\overline{\rm co}}\nolimits}
\def\rank{\mathop{\rm rank}\nolimits}
\newcommand{\LOG}{\log}
\def\supp{\mathop{\rm supp}\nolimits}
\begin{document}
\title{Mutual and coherent informations for infinite-dimensional quantum channels\footnote{The work is
partially supported by RFBR grant 09-01-00424à and the scientific
program ``Mathematical control theory'' of RAS.}}
\author{A.S. Holevo, M.E. Shirokov\\
Steklov Mathematical Institute, RAS, Moscow\\holevo@mi.ras.ru,
msh@mi.ras.ru}
\date{}
\maketitle

\begin{abstract}
The work is devoted to study of quantum mutual information and
coherent information --  the two important characteristics of
quantum communication channel. Appropriate definitions of these
quantities in the infinite-dimensional case are given and their
properties are studied in detail. The basic identities relating
quantum mutual information and coherent information of a pair of
complementary channels are proved. An unexpected continuity property
of quantum mutual information and coherent information, following
from the above identities, is observed. The upper bound for the coherent
information is obtained.
\end{abstract}

\tableofcontents

\section{Introduction}

One of achievements in quantum information theory is discovery
of a whole number of important entropic and informational
characteristics of quantum systems (see e.g.\cite{QSCI,NC}). Some of
them such as the $\chi$-capacity and quantum mutual information have
direct classical analogs, others -- such as coherent information and
various entanglement measures -- do not have such analogs or they
are trivial.

Until recent time the main attention in quantum information theory
was paid to finite-dimensional systems, but recently considerable
interest to infinite-dimensional systems appeared: a broad class
important for applications in quantum optics constitute Bosonic
Gaussian systems \cite{E&W,QSCI}. Note that the properties of
the entropy and the relative entropy were studied in great detail,
including infinite-dimensional case, in connection with quantum
statistical mechanics, see e.g.\cite{L,O&P,W}. A study of entropic
and informational characteristics of quantum communication channels
from the general viewpoint of operator theory in separable Hilbert
space was undertaken in \cite{H-Sh-2,H-Sh}, where the quantities
related to the classical capacity, in the first place -- the
$\chi$-capacity, were investigated. The present work is devoted to
two other characteristics -- the quantum mutual information and the
coherent information. The first one is closely related to the
entanglement-assisted classical capacity while the second -- to the
quantum capacity of a channel. One of the author's goal was to give
an appropriate definition of these quantities in the
infinite-dimensional case, which would not require additional
artificial assumptions. The difficulty which was overcome is in the
uncertainties in expressions containing differences of entropies,
each of which can be infinite in the infinite-dimensional case. A
main result is Theorem 1 which implies that these quantities are
naturally defined and finite on the set of input states with finite
entropy, where they satisfy identities (\ref{basic1}) and
(\ref{basic2}) for complementary channels (the quantum mutual
information is defined uniquely for all input states but can be
infinite, still satisfying identity (\ref{basic1})).

In the introductory Section~2 a description of the corresponding
quantities for a finite quantum system is given. In Section~3 we
give the definition and study the properties of the quantum mutual
information in the infinite dimensional case. The main identity
(\ref{basic1}) for complementary channels is proved in Section~4.
Section~5 is devoted to the coherent information. In Section~6 we
point out somewhat unexpected continuity property of the mutual and
coherent informations implied by the identity (\ref{basic1}).

\section{Finite-dimensional case}

Consider the quantum system described by a finite-dimensional
Hilbert space $\mathcal{H}$ and denote by
$\mathfrak{S}(\mathcal{H})$ the convex set of \emph{quantum states}
described by density operators in $\mathcal{H}$, i.e. positive
operators with unit trace: $\rho\ge 0$, $\Tr\rho=1$. \emph{Entropy}
of the state $\rho$ (von Neumann entropy) is defined by the
relation\footnote{In the present paper $\log$ denotes the natural
logarithm.}
\begin{equation}\label{ent0}
H(\rho)=\Tr \eta (\rho),\quad \eta(x)=
\begin{cases}
-x \log x, & x>0,\\ 0,& x=0.
\end{cases}
\end{equation}

Let three systems $A$, $B$, $E$, described by the spaces
$\mathcal{H}_A$, $\mathcal{H}_B$, $\mathcal{H}_E$,
correspondingly, and an isometric operator $V\colon \mathcal{H}_A\to
\mathcal{H}_B\otimes \mathcal{H}_E$ be given, then the relations
\begin{equation}
\Phi(\rho)=\Tr_EV\rho V^*,\qquad \widetilde{\Phi}(\rho)=\Tr_BV\rho V^*,\quad \rho\in
\mathfrak{S}\left(\mathcal{H}_A\right),\label{compl}
\end{equation}
where $\Tr_X(\cdot)\doteq\Tr_{\mathcal{H}_X}(\cdot)$, define
completely positive trace-preserving maps, i.e. quantum channels
$\Phi\colon \mathfrak{S}\left(\mathcal{H}_A\right)\to
\mathfrak{S}\left( \mathcal{H}_B\right)$ and $\widetilde{\Phi}\colon
\mathfrak{S}\left( \mathcal{H}_A\right)\to \mathfrak{S}
\left(\mathcal{H}_E\right)$, which are called mutually
\emph{complementary} (this construction is generalized to the
infinite dimensional case without changes). The systems $A, B$
describe, correspondingly, input and output of the channel $\Phi$,
and $E$ -- its ``environment'' (see details in \cite{QSCI,NC}). The
identity operator in a space $\mathcal{H}_{X}$ and the identity
transformation of the set $\mathfrak{S}\left(
\mathcal{H}_{X}\right)$ will be denoted $I_X$ and $\Id_X$
correspondingly.

Let $\rho=\rho_A$ be an input state in the space  $\mathcal{H}_A$,
$\rho_B$ and $\rho_E$ be  the results of action of the channels
$\Phi$ and $\widetilde{\Phi}$ on the state $\rho_A$ correspondingly.
The \emph{quantum mutual information} is defined as follows
\begin{equation}\label{mutin1}
I(\rho,\Phi)= H(A)+H(B)-H(E),
\end{equation}
where the brief notations $H(A)=H(\rho_A)$, etc., are used \cite{A}.
By introducing the reference system $\mathcal{H}_R \cong
\mathcal{H}_A$ and the purification vector $\psi_{AR}\in
\mathcal{H}_{A}\otimes\mathcal{H}_{R}$ for the state $\rho_A$, the
mutual information can be represented as follows
\begin{equation}\label{mutin2}
I(\rho,\Phi)=H(R)+H(B)-H(BR),
\end{equation}
where $\rho_{BR}=\left(\Phi \otimes \Id_R \right)(\ketbra
{\psi_{AR}} {\psi_{AR}})$.

The mutual information $I(\rho,\Phi)$ have the several properties
similar to the properties of the Shannon information (see
Proposition \ref{concave} below). In \cite{E} (see also
\cite{QSCI}) it is shown that
\begin{equation}\label{cea}
\max_\rho I(\rho,\Phi)=C_{ea}(\Phi)
\end{equation}
is the classical entanglement-assisted capacity of the  channel
$\Phi$.

By introducing the analogous characteristic for the complementary
channel
\begin{equation}\label{mutin}
I(\rho,\widetilde{\Phi})=H(A)+H(E)-H(B)=H(R)+H(E)-H(ER),
\end{equation}
we have the following basic identity
\begin{equation}\label{main}
I(\rho,\Phi)+I(\rho,\widetilde{\Phi})=2H(\rho).
\end{equation}

An important component of the quantum mutual information $I(\rho,
\Phi)$ is the \emph{coherent information} (see \cite{BNS})
\begin{equation}\label{c-inf}
I_c(\rho,\Phi)=H(B)-H(E)=H(B)-H(RB).
\end{equation}
This notion is closely related to the quantum capacity of the
channel $\Phi$. In \cite{dev} (see also \cite{QSCI}) it is shown
that
\begin{equation}
Q(\Phi )=\lim_{n\to +\infty}\frac{1}{n}\max_\rho I_c(\rho,\Phi^{\otimes
n})\label{UBC1}
\end{equation}
is the quantum capacity of the channel $\Phi$. Identity (\ref{main})
is equivalent to the following one
\begin{equation}\label{main1}
I_c(\rho,\Phi)+I_c(\rho,\widetilde{\Phi})=0.
\end{equation}

The aim of this paper is to explore definitions and properties of
the analogs of the values  $I(\rho,\Phi)$ and $I_c(\rho,\Phi)$ in an
infinite dimensional Hilbert space. In particular, it will be shown
that the coherent information is naturally defined on the set of
states with finite entropy, where the analog of identity
(\ref{main1}) holds. The results of this paper can be used for
generalization of relations (\ref{cea}) and (\ref{UBC1}) to the case
of infinite dimensional channels.

\section{Mutual information}

In what follows $\mathcal{H}$ is a separable Hilbert space. Let
$\mathfrak{T}(\mathcal{H})$ be the Banach space of trace class
operators, so that $\mathfrak{S}(\mathcal{H}) \subset
\mathfrak{T}(\mathcal{H})$. Consider the natural extension of the
von Neumann entropy $H(\rho)=\Tr\eta(\rho)$ of a quantum state
$\rho\in\mathfrak{S}(\mathcal{H})$ to the cone
$\mathfrak{T}_{+}(\mathcal{H})$ of all positive  trace class
operators. \medskip

\begin{definition}\label{q-e}
The \emph{entropy} of an operator $A \in
\mathfrak{T}_{+}(\mathcal{H})$ is defined as follows
\begin{equation}\label{ent}
H(A) =\Tr AH\left(\frac{A}{\Tr A}\right)=\Tr \eta (A)-\eta(\Tr A).
\end{equation}
\end{definition}\medskip

The entropy is a concave lower semicontinuous function on
the cone $\mathfrak{T}_{+}(\mathcal{H})$, taking values in $[0,
+\infty]$. By using Definition 1 and the well known properties of
the von Neumann entropy (see \cite{O&P}), it is easy to obtain the
following relations:
\begin{gather}
H(\lambda A)=\lambda H(A),\quad\lambda\ge 0,\label{H-fun-eq}\\ H(A)+H(B-A)\le H(B)\le
H(A)+H(B-A)+\Tr B\, h_2\left(\frac{\Tr A}{\Tr B}\right)\label{H-fun-ineq},
\end{gather}
where $A,B\in\mathfrak{T}_+(\mathcal{H})$, $A\le B$, and
$h_2(x)=\eta(x)+\eta(1-x)$.

We will also use the function $S(A)=\Tr\eta (A)$ on the cone
$\mathfrak{T}_{+}(\mathcal{H})$ coinciding with the function $H(A)$
on the set $\mathfrak{S}(\mathcal{H})$.\medskip

\begin{definition}\label{r-e}
The \emph{relative entropy} of operators
$A,B\in\mathfrak{T}_{+}(\mathcal{H})$ is defined as follows
$$
H(A\| B)=
\begin{cases}
\sum_{i=1}^{+\infty}\langle e_i|\,(A\log A-A\log B+B-A)\,|e_i\rangle, & \supp
A\subseteq \supp B,\\ +\infty, & \supp A\nsubseteq \supp B,
\end{cases}
$$
where $\left\{|e_i\rangle\right\}_{i=1}^{+\infty}$ is the
orthonormal basis of eigenvalues  of the operator $A$ \cite{L}.
\end{definition}\medskip

We will use the following lemma (\cite[Lemma~4]{L}).\medskip
\begin{lemma}
\emph{Let $\{ P_n\}$ be a nondecreasing sequence of projectors, converging
to the identity operator $I$ in the strong operator topology, and
$A,B$ be arbitrary positive trace class operators. Then the
sequences $\{H(P_nAP_n)\}$ and $\{ H(P_nAP_n\| P_nBP_n)\}$ are
nondecreasing,}
$$
H(A)=\lim_{n\to +\infty} H(P_nAP_n)\qquad\textit{and\/}\qquad H(A\| B)=\lim_{n\to
\infty} H(P_nAP_n\| P_nBP_n).
$$
\end{lemma}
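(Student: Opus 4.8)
The plan is to treat each statement as the combination of two facts: monotonicity of the truncated sequence along any pair of projectors $P\le P'$, and identification of the resulting monotone limit with the full quantity. Throughout I will use that $P_nAP_n\to A$ and $P_nBP_n\to B$ in the trace norm (which follows from $P_n\to I$ strongly together with $A,B\in\mathfrak{T}_{+}(\H)$), so that the lower semicontinuity of the entropy (stated above) and the joint lower semicontinuity of the relative entropy are available. Since any two consecutive projectors satisfy $P_n\le P_{n+1}$, monotonicity of the sequences is exactly the assertion $H(PAP)\le H(P'AP')$, resp.\ $H(PAP\|PBP)\le H(P'AP'\|P'BP')$, for $P\le P'$; and the choice $P'=I$ yields a uniform upper bound by $H(A)$, resp.\ $H(A\|B)$, which is what the convergence statement needs against the semicontinuity bound.

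For the entropy I would argue by a compression/interlacing estimate. Restricting to $\K=P'\H$, the operator $PAP$ is the compression of $A'=P'AP'$ to the subspace $P\H\subseteq\K$, so by the min-max principle its eigenvalues $\mu_i$ are dominated by those $\lambda_i$ of $A'$, i.e.\ $\mu_i\le\lambda_i$ for all $i$. Hence there is a unitary $U$ on $\K$ sending the eigenbasis of $A'$ onto that of $PAP$ for which $U(PAP)U^*\le A'$. Applying the left inequality in (\ref{H-fun-ineq}) to the pair $U(PAP)U^*\le A'$ and using the unitary invariance and nonnegativity of $H$ gives $H(PAP)=H(U(PAP)U^*)\le H(A')=H(P'AP')$, proving both monotonicity and, for $P'=I$, the bound $H(P_nAP_n)\le H(A)$. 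Combined with $H(A)\le\liminf_n H(P_nAP_n)$ from lower semicontinuity, this forces $\lim_n H(P_nAP_n)=H(A)$.

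For the relative entropy the unitary trick is unavailable, and I would instead exploit the pinching map $\mathcal{E}(X)=PXP+RXR$ with $R=P'-P$, which is completely positive and trace preserving on $\mathfrak{T}(P'\H)$. Monotonicity of the extended relative entropy under such maps gives $H(\mathcal{E}(A')\|\mathcal{E}(B'))\le H(A'\|B')$, while the block-diagonal structure of $\mathcal{E}(A')$, $\mathcal{E}(B')$ yields the additivity $H(\mathcal{E}(A')\|\mathcal{E}(B'))=H(PAP\|PBP)+H(RAR\|RBR)$; dropping the nonnegative second term proves $H(PAP\|PBP)\le H(P'AP'\|P'BP')$, hence monotonicity of the sequence. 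Taking $P'=I$ and $R=I-P_n$ gives the uniform bound $H(P_nAP_n\|P_nBP_n)\le H(A\|B)$, and joint lower semicontinuity supplies the reverse inequality $H(A\|B)\le\liminf_n H(P_nAP_n\|P_nBP_n)$; together they identify the limit, the equality persisting when $H(A\|B)=+\infty$.

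The main obstacle is entirely on the relative-entropy side: one must have, in the genuinely infinite-dimensional setting and for the extended Definition~\ref{r-e} (with its support convention and possible value $+\infty$), both the monotonicity under completely positive trace-preserving maps together with the block additivity and nonnegativity used above, and the joint lower semicontinuity in the trace norm. These are the nontrivial inputs, and I would take them from the Lindblad--Uhlmann theory of relative entropy (indeed from \cite{L}); care is also needed to verify that the bookkeeping of supports makes the block decomposition legitimate. The entropy statement, by contrast, is elementary once (\ref{H-fun-ineq}) and Cauchy interlacing are in hand.
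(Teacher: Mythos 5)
The paper does not actually prove this statement: it is imported verbatim as Lemma~4 of Lindblad's paper \cite{L}, so there is no internal proof to compare against. Your argument is correct and is essentially the standard route to this result, and all the ingredients you invoke are ones the paper itself places before the lemma or treats as known: the left inequality in (\ref{H-fun-ineq}) together with nonnegativity gives operator monotonicity of the normalized entropy $H$ (note this would fail for $S(A)=\Tr\eta(A)$, so your use of $H$ rather than $S$ matters), Cauchy interlacing plus the simultaneous diagonalization supplies the comparison $U(PAP)U^*\le P'AP'$, and the pinching $X\mapsto PXP+RXR$ combined with monotonicity (\ref{monoton}), direct-sum additivity and nonnegativity of the extended relative entropy gives the monotonicity of $\{H(P_nAP_n\|P_nBP_n)\}$ and the uniform bound by $H(A\|B)$; lower semicontinuity then closes both limits, including the $+\infty$ cases. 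The one point you should make explicit is the risk of circularity in the relative-entropy half: in much of the literature (including Lindblad's own development) lower semicontinuity of $H(\cdot\|\cdot)$ in infinite dimensions is \emph{derived from} this very approximation lemma, by writing the relative entropy as a supremum of the continuous truncated quantities. Your proof is therefore only noncircular if you take lower semicontinuity from an independent source, e.g.\ Wehrl's representation of $H(A\|B)$ as a supremum of classical relative entropies over finite measurements \cite{W}; similarly, the monotonicity (\ref{monoton}) you use for a pinching must be available for non-normalized trace class operators without appeal to the lemma. With those provisos stated, the argument is complete, and it has the merit of being self-contained where the paper simply cites \cite{L}.
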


\begin{definition}\label{channel} A \textit{quantum channel}
is a linear trace-preserving map\break
$\Phi:\mathfrak{T}(\mathcal{H}_A)\rightarrow
\mathfrak{T}(\mathcal{H}_B)\,$ such that the dual map $\Phi^{*}$ from
the $C^{*}$-algebra $\mathfrak{B}(\mathcal{H}_B)$ into the
$C^{*}$-algebra $\mathfrak{B}(\mathcal{H}_A)$ is completely positive
\cite{d}.
\end{definition}\medskip

In what follows we will use the fundamental \emph{monotonicity}
property of the relative entropy established in \cite{L} and
expressed by the inequality:
\begin{equation}\label{monoton}
H(\Phi(A)\| \Phi(B))\le H(A\| B)
\end{equation}
valid for an arbitrary  quantum channel $\Phi$ and arbitrary
positive trace class operators $A$ and $B$.\medskip

\begin{definition}\label{m-inf}
Let $\Phi\colon \mathfrak{S}(\mathcal{H}_A)\to
\mathfrak{S}(\mathcal{H}_B)$ be a quantum channel and $\rho$ be an
arbitrary quantum state in $\mathfrak{S}(\mathcal{H}_A)$ with the
spectral representation  $\rho=\sum_{i=1}^{+\infty} \lambda_i
\ket {e_i} \bra {e_i}$. The \emph{mutual information} of the channel
$\Phi$ at the state $\rho$ is defined as follows
$$
I(\rho, \Phi)=H(\Phi \otimes \Id_R (\ket {\varphi_\rho} \bra {\varphi_\rho}) \|
\Phi (\rho) \otimes \rho),
$$
where
\begin{equation}\label{purif}
\ket {\varphi_\rho}=\sum_{i=1}^{+\infty} \sqrt{\lambda_i} \ket {e_i} \otimes \ket
{e_i}\in \mathcal{H}_A \otimes \mathcal{H}_R
\end{equation}
is a purification vector\footnote{This means that $\Tr_R\ket
{\varphi_\rho}\bra {\varphi_\rho}=\rho$.} for the state $\rho$.
\end{definition}\medskip

Note that in the case  $\dim\mathcal{H}_A < +\infty$ and
$\dim\mathcal{H}_B < +\infty$ this definition is equivalent to
(\ref{mutin1}), (\ref{mutin2}), since
$$
\begin{array}{c}
H(\Phi \otimes \Id_R (\ket {\varphi_\rho}\bra {\varphi_\rho})\| \Phi (\rho)
\otimes \rho)=H(\rho_{BR}\| \rho_B \otimes\rho_R)\\\\ = \Tr (\rho_{BR}
(\LOG(\rho_{BR})-\LOG(\rho_B\otimes \rho_R)))\\\\=-H(\rho_{BR})+H(\rho_B)+H(\rho_R)=-H(BR)+H(B)+H(R).
\end{array}
$$

\begin{remark}\label{m-inf-r}
The above definition of the value $I(\rho, \Phi)$ does not depend on
the choice of the space $\mathcal{H}_R$ and of the purification
vector $\varphi_{\rho}$. This can be shown by using the well known
relation between different purification vectors of a given state
(see \cite{QSCI,NC}) and properties of the relative entropy. $\square$
\end{remark}\medskip

In the finite dimensional case concavity of the mutual information
as a function of $\rho$ on the set  $\mathfrak{S}(\mathcal{H}_A)$
follows from concavity of the conditional entropy
$H(EB)-H(E)$ \cite{A, QSCI}. In the case
$\dim\mathcal{H}_A =+\infty$ and $\dim\mathcal{H}_B <+\infty$ this
implies concavity of the mutual information as a function of $\rho$
on the set $\mathfrak{S}_\mathrm{f}(\mathcal{H}_A)=\{\rho \in
\mathfrak{S}(\mathcal{H}_A)\,|\,\mathrm{rank}\rho < +\infty\}$.

In what follows convergence of quantum states means convergence of
the corresponding density operators to a limit operator in the trace
norm, which is equivalent to the weak operator convergence (see
\cite{d} or \cite[Appendix A]{H-Sh-2}). Note that the entropy and
the relative entropy are lower semicontinuous in their arguments
with respect to this convergence \cite{W}.

Let $\mathfrak{F}(A,B)$ be the set of all quantum channels from
$\mathfrak{S}(\mathcal{H}_A)$ to $\mathfrak{S}(\mathcal{H}_B)$
endowed with the \emph{strong convergence} topology \cite{H-Sh}.
Strong convergence of a  sequence
$\{\Phi_n\}\subset\mathfrak{F}(A,B)$ to a quantum
 channel $\Phi_0\in\mathfrak{F}(A,B)$ means that
$\lim_{n\to+\infty}\Phi_n(\rho)=\Phi_0(\rho)$ for any state
$\rho\in\mathfrak{S}(\mathcal{H}_A)$.

The following proposition is devoted to
generalization of the observations in \cite{A} to the infinite
dimensional case.\medskip

\begin{property}\label{concave} \emph{The function $(\rho, \Phi)\mapsto I(\rho, \Phi)$ is
nonnegative and  lower semicontinuous on the set
$\mathfrak{S}(\mathcal{H}_A)\times\mathfrak{F}(A,B)$. It has the
following properties:}
\begin{enumerate}[1)]
\item\vskip-5pt
concavity in $\rho$:\emph{ $I(\lambda\rho_{1}+(1-\lambda)\rho_2,
\Phi)\ge \lambda I(\rho_1, \Phi)+(1-\lambda)I(\rho_2, \Phi)$\textup;}
\item
convexity in $\Phi$: \emph{$I(\rho,
\lambda\Phi_1+(1-\lambda)\Phi_2)\le \lambda I(\rho,
\Phi_1)+(1-\lambda)I(\rho, \Phi_2)$\textup;}
\item
the 1-th chain rule: \emph{for arbitrary channels $\Phi\colon
\mathfrak{S}(\mathcal{H}_A)\to \mathfrak{S}(\mathcal{H}_B)$ and
$\Psi\colon \mathfrak{S}(\mathcal{H}_B)\to
\mathfrak{S}(\mathcal{H}_C)$ the inequality $\,I(\rho,
\Psi\circ\Phi)\le I(\rho, \Phi)\,$ holds for any
$\,\rho\in\mathfrak{S}(\mathcal{H}_A)$\textup;}
\item
the 2-th chain rule: \emph{for arbitrary channels $\Phi\colon
\mathfrak{S}(\mathcal{H}_A)\to \mathfrak{S}(\mathcal{H}_B)$ and
$\Psi\colon \mathfrak{S}(\mathcal{H}_B)\to
\mathfrak{S}(\mathcal{H}_C)$ the inequality $\,I(\rho,
\Psi\circ\Phi)\le I(\Phi(\rho), \Psi)\,$ holds for any
$\,\rho\in\mathfrak{S}(\mathcal{H}_A)$\textup;}
\item
subadditivity: \emph{for arbitrary channels $\Phi\colon
\mathfrak{S}(\mathcal{H}_A) \to \mathfrak{S}(\mathcal{H}_B)$ and \\
$\Psi\colon \mathfrak{S}(\mathcal{H}_C)\to
\mathfrak{S}(\mathcal{H}_D)$ the inequality
\begin{equation}\label{subaddit}
I(\omega, \Phi\otimes\Psi)\le I(\omega_A,\Phi)+I(\omega_{C},\Psi)
\end{equation}
holds for any
$\omega\in\mathfrak{S}(\mathcal{H}_A\otimes\mathcal{H}_C)$.}
\end{enumerate}
\end{property}

\begin{proof} Nonnegativity of the value $I(\rho, \Phi)$
follows from nonnegativity of the relative entropy. By Lemma
\ref{purification} below and Remark \ref{m-inf-r} lower
semicontinuity of the function $(\rho, \Phi)\mapsto I(\rho, \Phi)$
follows from lower semicontinuity of the relative entropy in the
both arguments.

To prove concavity of the function $\rho\mapsto I(\rho, \Phi)$
suppose first that  $\dim\mathcal{H}_B$ is finite. Let $\rho =
\alpha \sigma_1 + (1-\alpha) \sigma_2$ and  $\{ P_n \}$ be an
increasing sequence of finite rank spectral projectors of the state
$\rho$ strongly converging to $I_{A}$. Let
$$ \rho_n=\frac {P_n \rho
P_n} {\Tr P_n \rho}=\frac{\alpha P_n \sigma_1 P_n+(1-\alpha) P_n
\sigma_2 P_n} {\alpha \Tr P_n \sigma_1 + (1-\alpha) \Tr P_n
\sigma_2}=\frac {\mu_1^n \sigma_1^n+\mu_2^n \sigma_2^n} {\mu_1^n +
\mu_2^n},
$$
where
$$
\begin{gathered}
\mu_1^n=\alpha \Tr P_n \sigma_1,\qquad \sigma_1^n=\alpha \frac {P_n \sigma_1
P_n}{\mu_1^n},\\ \mu_2^n=(1-\alpha) \Tr P_n \sigma_2,\qquad \sigma_2^n=(1-\alpha)
\frac {P_n \sigma_2 P_n} {\mu_2^n}.
\end{gathered}
$$
By  concavity of the function $\rho\mapsto I(\rho,\Phi)$ on the set
$\mathfrak{S}_\mathrm{f}(\mathcal{H}_A)$ mentioned before
Proposition \ref{concave} we have
$$
I(\rho_n,\Phi)\ge \frac {\mu_1^n} {\mu_1^n+\mu_2^n} I(\sigma_1^n, \Phi)+\frac
{\mu_2^n} {\mu_1^n+\mu_2^n} I(\sigma_2^n, \Phi).
$$
Lemma $\ref{i_n}$ below implies $\lim_{n\to +\infty} I(\rho_n,
\Phi)=I(\rho, \Phi)$. By using lower semicontinuity of the function
$\rho\mapsto I(\rho, \Phi)$, we obtain
$$
\begin{aligned}
I(\rho, \Phi)&\ge \liminf_{n\to +\infty} \frac {\mu_1^n} {\mu_1^n+\mu_2^n}
I(\sigma_1^n, \Phi)+\liminf_{n\to +\infty}\frac {\mu_2^n} {\mu_1^n+\mu_2^n}
I(\sigma_2^n, \Phi)\\ &\ge \alpha I(\sigma_1, \Phi) + (1-\alpha)I(\sigma_2,\Phi).
\end{aligned}
$$

Let $\Phi$ be an arbitrary quantum channel. Consider the sequence of
channels $\Phi_n=\Pi_n \circ \Phi$ with finite dimensional output,
where
$$
\Pi_n(\rho)=P_n \rho P_n+[\Tr ((I-P_n)\rho)]\ketbra \psi \psi
$$
is a quantum channel from $\mathfrak{S}(\mathcal{H}_B)$ to itself
for each $n$,  $\{P_n\}$ is an increasing sequence of finite rank
projectors strongly converging to $I_{B}$, $\ketbra \psi \psi$ is a
fixed pure state in $\mathfrak{S}(\mathcal{H}_B)$. Then for each $n$
the function $\rho\mapsto I(\rho, \Phi_n)$ is concave by the above
observation. Since
$$
I(\rho, \Phi_n)\le I(\rho, \Phi)\;\;\forall n\qquad \textrm{and}\qquad \liminf_{n\to
+\infty} I(\rho, \Phi_n)\ge I(\rho, \Phi)
$$
by monotonicity of the relative entropy and lower semicontinuity of
the function $\Phi\mapsto I(\rho, \Phi)$, we have
$$
I(\rho, \Phi)=\lim_{n\to
+\infty} I(\rho, \Phi_n),
$$
Hence the  function $\rho\mapsto I(\rho, \Phi)$ is concave as a
pointwise limit of a sequence  of concave functions.

Convexity of the function $\Phi\mapsto I(\rho, \Phi)$ follows from
joint convexity of the relative entropy in their arguments \cite{W}.

The 1-th chain rule immediately follows from Definition \ref{m-inf}
and monotonicity of the relative entropy.

The 2-th chain rule is also proved by using monotonicity of the
relative entropy as follows.

Let $\ketbra \varphi \varphi$ be a purification of the state
$\rho\in\mathfrak{S}(\mathcal{H}_A)$ in the space
$\mathcal{H}_A\otimes\mathcal{H}_R$, then $\ketbra \psi
\psi=V\otimes I_R\ketbra \varphi \varphi V^*\otimes I_R$ is a
purification of the state $\Phi(\rho)\in\mathfrak{S}(\mathcal{H}_B)$
in the space $\mathcal{H}_B\otimes\mathcal{H}_E\otimes\mathcal{H}_R$
($V$ is the isometry from representation (\ref{compl}) of the
channel $\Phi$). Hence
$$
I(\Phi(\rho),\Psi)=H\left(\Psi\otimes \Id_{ER}(\ketbra \psi \psi )\|
\,\Psi(\Tr_{ER}\ketbra \psi \psi )\otimes\Tr_B\ketbra \psi \psi\right).
$$
A direct verification shows that taking the partial trace over the
space $\mathcal{H}_E$ on each arguments of the relative entropy in
the above expression transforms the right side of this expressions
to
$$
H\left((\Psi\circ\Phi)\otimes \Id_R(\ketbra \varphi \varphi )\|
\,(\Psi\circ\Phi)(\Tr_R\ketbra \varphi \varphi )\otimes\Tr_A\ketbra \varphi \varphi
\right)=I(\rho,\Psi\circ\Phi).
$$

The subadditivity property of the mutual information will be derived
from the corresponding property of this characteristics for finite
dimensional channels \cite{A,QSCI}.

Let $\{Q^X_{n}\}$ be an increasing sequence of finite rank
projectors in the space $\mathcal{H}_{X}$, strongly converging to
the operator $I_{X}$, where  $X=B,D$. The sequence of  channels
$$
\Pi^X_n(\rho)=Q^X_n\rho Q^X_n+ \bigl(\Tr(I_X-Q^X_n)\rho\bigr)\tau_X
$$
from $\mathfrak{S}(\mathcal{H}_{X})$ to itself, where $\tau_X$ is an
arbitrary pure state in the space $\mathcal{H}_{X}$, strongly
converges to the channel $\Id_X$.

Let $\omega$ be an arbitrary state in
$\mathfrak{S}(\mathcal{H}_A\otimes\mathcal{H}_C)$. Let $\{P^X_{n}\}$
be an increasing sequence of  finite rank  spectral projectors of
the state $\omega_X$, strongly converging to the operator $I_{X}$,
where $X=A,C$.

Consider the sequence of states
$$
\omega^n= \bigl(\Tr\bigl((P^A_n\otimes
P^C_n)\cdot\omega\bigr)\bigr)^{-1}(P^A_n\otimes P^C_n)\cdot\omega\cdot (P^A_n\otimes
P^C_n),
$$
converging to the state $\omega$.

A direct verification shows that
$$
\lambda_n\omega^n_X\le\omega_X,\quad X=A,C,\quad \textup{where}\;
\lambda_n=\Tr\bigl((P^A_n\otimes P^C_n)\cdot\omega\bigr).
$$

By Lemma \ref{approximation} below we have
\begin{equation}\label{key-ineq}
\lim_{n\to+\infty}I(\omega^n_A,\Pi^B_n\circ\Phi)=I(\omega_A,\Phi)\;\;
\textrm{and}\;\,
\lim_{n\to+\infty}I(\omega^n_{C},\Pi^D_n\circ\Psi)=I(\omega_{C},\Psi).
\end{equation}

Subadditivity of the mutual information for finite dimensional
channels implies
$$
I(\omega^n, (\Pi^B_n\circ\Phi)\otimes(\Pi^D_n\circ\Psi))\le
I(\omega^n_A,\Pi^B_n\circ\Phi)+I(\omega^n_{C},\Pi^D_n\circ\Psi).
$$
By (\ref{key-ineq}) and lower semicontinuity of the mutual
information as a function of a pair (state, channel) passing to the
limit in this inequality implies (\ref{subaddit}).
\end{proof}

In the proof of Proposition \ref{concave} the following lemmas were
used.\medskip
\begin{lemma}\label{purification}
\emph{Let $\,\mathcal{H}$ be a separable Hilbert space. For an arbitrary
sequence $\{\rho_n\}\subset\mathfrak{S}(\mathcal{H})$, converging to
a state $\rho_0$, there exists a corresponding purification sequence
$\{\hat{\rho}_n\}\subset\mathfrak{S}(\mathcal{H}\otimes\mathcal{H})$,
converging to a purification $\hat{\rho}_0$ of the state $\rho_0$.}
\end{lemma}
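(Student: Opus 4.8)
The plan is to construct all purifications by one canonical, basis-independent formula, so that the continuity statement reduces to continuity of the operator square root. Fix an orthonormal basis $\{\ket{i}\}$ of $\mathcal{H}$ and let $W$ be the standard unitary from the Hilbert space of Hilbert--Schmidt operators on $\mathcal{H}$ onto $\mathcal{H}\otimes\mathcal{H}$ determined by $W(\ketbra{i}{j})=\ket{i}\otimes\ket{j}$. For any state $\rho\in\mathfrak{S}(\mathcal{H})$ the operator $\sqrt{\rho}\in\mathfrak{T}_+(\mathcal{H})$ is Hilbert--Schmidt with $\Tr(\sqrt{\rho})^2=\Tr\rho=1$, so $\ket{\varphi_\rho}:=W(\sqrt{\rho})$ is a unit vector; a direct computation with matrix elements gives $\Tr_R\ketbra{\varphi_\rho}{\varphi_\rho}=\sqrt{\rho}(\sqrt{\rho})^*=\rho$, where $R$ denotes the second tensor factor. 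Thus $\hat{\rho}:=\ketbra{\varphi_\rho}{\varphi_\rho}$ is a purification of $\rho$ (this is just the purification (\ref{purif}) written without reference to the spectral decomposition). I would set $\hat{\rho}_n=\ketbra{\varphi_{\rho_n}}{\varphi_{\rho_n}}$ and $\hat{\rho}_0=\ketbra{\varphi_{\rho_0}}{\varphi_{\rho_0}}$.

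Next I would reduce the desired trace-norm convergence $\hat{\rho}_n\to\hat{\rho}_0$ to norm convergence of the vectors. For unit vectors one has the elementary bound $\|\ketbra{\psi}{\psi}-\ketbra{\phi}{\phi}\|_1\le\|\,\ket{\psi}(\bra{\psi}-\bra{\phi})\|_1+\|(\ket{\psi}-\ket{\phi})\bra{\phi}\|_1=2\|\ket{\psi}-\ket{\phi}\|$, where $\|\cdot\|_1$ is the trace norm. Since $W$ is an isometry, $\|\ket{\varphi_{\rho_n}}-\ket{\varphi_{\rho_0}}\|=\|\sqrt{\rho_n}-\sqrt{\rho_0}\|_2$ with $\|\cdot\|_2$ the Hilbert--Schmidt norm, so everything comes down to the continuity of the map $\rho\mapsto\sqrt{\rho}$ from the trace norm to the Hilbert--Schmidt norm.

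This continuity is the only real content, and it is the step I expect to carry the weight of the proof. It is furnished by the Powers--Størmer inequality $\|\sqrt{A}-\sqrt{B}\|_2^{2}\le\|A-B\|_1$, valid for arbitrary positive trace class operators $A,B$, which yields $\|\sqrt{\rho_n}-\sqrt{\rho_0}\|_2^{2}\le\|\rho_n-\rho_0\|_1\to 0$ by the assumed trace-norm convergence $\rho_n\to\rho_0$. Combining this with the two elementary bounds above gives $\hat{\rho}_n\to\hat{\rho}_0$ in trace norm, which completes the argument. I would emphasize that the square-root purification is essential here: a naive attempt through the spectral purification (\ref{purif}) of each $\rho_n$ fails, because the eigenvectors and eigenprojections of $\rho_n$ need not depend continuously on $\rho_n$ where eigenvalues coincide or cross, whereas $\sqrt{\rho}$ always does.
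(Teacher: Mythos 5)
Your proof is correct. It takes a recognizably different, more self-contained route than the paper's: the paper disposes of the lemma in one line by invoking the Bures-distance inequality $\beta(\rho,\sigma)^2\le\|\rho-\sigma\|_1$, where $\beta(\rho,\sigma)=\inf\|\varphi_\rho-\varphi_\sigma\|$ is an infimum over \emph{all} pairs of purification vectors, citing the textbook literature; this leaves implicit the (true but nontrivial, Uhlmann-theorem-flavoured) point that the near-optimal purifications of $\rho_0$ can be chosen convergent, or equivalently that the infimum can be taken with one purification held fixed. You instead exhibit a single canonical continuous section $\rho\mapsto W(\sqrt{\rho})$ of the purification map and reduce everything to the Powers--St{\o}rmer inequality $\|\sqrt{A}-\sqrt{B}\|_2^2\le\|A-B\|_1$, which is precisely the explicit witness behind the Bures bound the paper quotes. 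Each step of yours checks out: the vectorization $W$ is a unitary from Hilbert--Schmidt operators onto $\mathcal{H}\otimes\mathcal{H}$ with $\Tr_R\ketbra{W(T)}{W(T)}=TT^*$, the rank-one estimate $\|\ketbra{\psi}{\psi}-\ketbra{\phi}{\phi}\|_1\le 2\|\ket{\psi}-\ket{\phi}\|$ is elementary, and your closing remark about the discontinuity of spectral data correctly identifies why the purification (\ref{purif}) cannot be used naively. The only cosmetic inaccuracy is the parenthetical claim that $W(\sqrt{\rho})$ ``is just'' the purification (\ref{purif}): in a general basis it is $\sum_k\sqrt{\lambda_k}\,\ket{e_k}\otimes\ket{\bar{e}_k}$, i.e.\ (\ref{purif}) composed with complex conjugation on the reference factor; since the lemma only asks for \emph{some} convergent purification sequence (and Remark 1 makes the choice immaterial), this does not affect the argument. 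What your approach buys is an explicit, citation-free proof; what the paper's buys is brevity.
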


\begin{proof} The assertion of the lemma follows from the inequality
$$
\beta (\rho, \sigma)^2\le \|\rho-\sigma\|_1
$$
for the Bures distance  $\beta (\rho, \sigma) = \inf \|
\varphi_{\rho} - \varphi_{\sigma} \|$, where the infimum is over all
purification vectors $\varphi_{\rho}$ and $\varphi_{\sigma}$ of the
states $\rho$ and $\sigma$ \cite{QSCI,NC}.
\end{proof}

\begin{lemma}\label{i_n} \emph{Let $\Phi\colon \mathfrak{S}(\mathcal{H}_A)\to
\mathfrak{S}(\mathcal{H}_B)$ be a quantum channel such that
$\dim\mathcal{H}_B<+\infty$ and  $\rho_0$ be a state in
$\mathfrak{S}(\mathcal{H}_A)$ with the spectral representation
$\rho_0=\sum_{i=1}^{+\infty} \lambda_i \ket {e_i} \bra {e_i}$.
Let
\begin{equation}\label{spect}
\rho_n=\frac {1}{\mu_n}\sum_{i=1}^n \lambda_i \ket {e_i} \bra {e_i},\quad
\text{where\/}\quad \mu_n=\sum_{i=1}^n \lambda_i,
\end{equation}
for every $n$. Then $\,\lim_{n\to +\infty} I(\rho_n,
\Phi)=I(\rho_0, \Phi)$.}
\end{lemma}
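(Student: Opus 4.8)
The plan is to reduce everything to the relative-entropy form of Definition~\ref{m-inf} and to exploit the Lindblad approximation Lemma~1 applied to the projectors $P_n=\sum_{i=1}^n\ketbra{e_i}{e_i}$ attached to the given spectral decomposition of $\rho_0$. Writing $\Omega_0=\ketbra{\varphi_{\rho_0}}{\varphi_{\rho_0}}$ and $\Omega_n=\ketbra{\varphi_{\rho_n}}{\varphi_{\rho_n}}$ for the purifications from (\ref{purif}), the key algebraic observation is that $(I_A\otimes P_n)\ket{\varphi_{\rho_0}}=\sqrt{\mu_n}\,\ket{\varphi_{\rho_n}}$. Since $P_n$ acts only on the reference space $R$, conjugation by $\Pi_n:=I_B\otimes P_n$ may be pulled through $\Phi\otimes\Id_R$ onto $I_A\otimes P_n$, which yields
$$\Pi_n\,(\Phi\otimes\Id_R)(\Omega_0)\,\Pi_n=\mu_n\,(\Phi\otimes\Id_R)(\Omega_n),\qquad \Pi_n\,(\Phi(\rho_0)\otimes\rho_0)\,\Pi_n=\mu_n\,\Phi(\rho_0)\otimes\rho_n.$$
Applying Lemma~1 to the two arguments of $I(\rho_0,\Phi)=H((\Phi\otimes\Id_R)(\Omega_0)\,\|\,\Phi(\rho_0)\otimes\rho_0)$ and using the scaling identity $H(\mu A\,\|\,\mu B)=\mu H(A\,\|\,B)$ together with $\mu_n\to 1$, I would obtain
$$I(\rho_0,\Phi)=\lim_{n\to+\infty}H\big((\Phi\otimes\Id_R)(\Omega_n)\,\big\|\,\Phi(\rho_0)\otimes\rho_n\big).$$

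The point to watch is that the right-hand side carries $\Phi(\rho_0)$ in the second argument, whereas $I(\rho_n,\Phi)$ requires $\Phi(\rho_n)$. To bridge this I would expand both relative entropies of states in the form $H(\sigma\,\|\,\tau)=-H(\sigma)-\Tr\sigma\LOG\tau$, which is legitimate here because $\rho_n$ has finite rank and $\dim\mathcal H_B<+\infty$ force all entropies involved (in particular $H((\Phi\otimes\Id_R)(\Omega_n))$) to be finite; the needed support inclusions hold since $\rho_n\le\mu_n^{-1}\rho_0$ gives $\Phi(\rho_n)\le\mu_n^{-1}\Phi(\rho_0)$ and hence $\supp\Phi(\rho_n)\subseteq\supp\Phi(\rho_0)$. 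Subtracting the two expansions, the entropy of $(\Phi\otimes\Id_R)(\Omega_n)$ and the $R$-term cancel, leaving the exact identity
$$I(\rho_n,\Phi)=H\big((\Phi\otimes\Id_R)(\Omega_n)\,\big\|\,\Phi(\rho_0)\otimes\rho_n\big)-H(\Phi(\rho_n)\,\|\,\Phi(\rho_0)).$$

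It then remains to show that the correction term vanishes, i.e. $H(\Phi(\rho_n)\,\|\,\Phi(\rho_0))\to 0$. This is where finite-dimensionality of $\mathcal H_B$ is essential: since $\rho_n\to\rho_0$ in trace norm and $\Phi$ is continuous, $\Phi(\rho_n)\to\Phi(\rho_0)$, and on the finite-dimensional space $\mathcal H_B$ the von Neumann entropy is continuous while $\Tr\Phi(\rho_n)\LOG\Phi(\rho_0)\to\Tr\Phi(\rho_0)\LOG\Phi(\rho_0)$, the logarithm being bounded on $\supp\Phi(\rho_0)\supseteq\supp\Phi(\rho_n)$. Hence $H(\Phi(\rho_n)\,\|\,\Phi(\rho_0))\to H(\Phi(\rho_0)\,\|\,\Phi(\rho_0))=0$, and combining with the displayed limit gives $\lim_{n\to+\infty} I(\rho_n,\Phi)=I(\rho_0,\Phi)$. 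The main obstacle is precisely the mismatch between $\Phi(\rho_0)$ and $\Phi(\rho_n)$ in the second argument; everything else is the careful bookkeeping of supports and the verification that no $\infty-\infty$ indeterminacy arises, which is guaranteed by the boundedness of the correction term.
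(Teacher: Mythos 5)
Your proposal is correct and follows essentially the same route as the paper: the same spectral projectors $P_n$ fed into Lindblad's Lemma~1 to identify $I(\rho_0,\Phi)$ as the limit of $H\bigl((\Phi\otimes\Id_R)(\Omega_n)\,\|\,\Phi(\rho_0)\otimes\rho_n\bigr)$, followed by the same exact identity reducing the discrepancy with $I(\rho_n,\Phi)$ to the correction term $H(\Phi(\rho_n)\,\|\,\Phi(\rho_0))$. The only (minor) divergence is in killing that correction term: you invoke finite-dimensional continuity of the entropy and boundedness of $\log\Phi(\rho_0)$ on its support, whereas the paper gets it in one line from monotonicity of the relative entropy, $H(\Phi(\rho_n)\,\|\,\Phi(\rho_0))\le H(\rho_n\,\|\,\rho_0)=-\log\mu_n\to 0$, an estimate that does not use $\dim\mathcal{H}_B<+\infty$.
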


\begin{proof}
Let $P_n=\sum_{i=1}^n \ket {e_i} \bra {e_i}$,
$n=1,2,\ldots\strut$ Since $\dim \mathcal{H}_B <\infty$, the
following value is finite
$$
\begin{aligned}
I_n&=H(\Phi \otimes \Id_R(\hat {\rho}_n)\| \Phi(\rho_0) \otimes \rho_n)\\
&=\mu_n^{-1} H(Q_n\left(\Phi \otimes \Id_R(\hat {\rho}_0) \right) Q_n\| Q_n
\left(\Phi(\rho_0) \otimes \rho_0 \right)Q_n),
\end{aligned}
$$
where
$$
\hat{\rho}_0=\sum_{i,j=1}^{+\infty} \sqrt{\lambda_i \lambda_j} \ket {e_i} \bra {e_j}
\otimes \ket {e_i} \bra {e_j},\qquad \hat{\rho}_n=\mu_n^{-1}\sum_{i,j=1}^n
\sqrt{\lambda_i \lambda_j} \ket {e_i} \bra {e_j} \otimes \ket {e_i} \bra {e_j}
$$
and $Q_n=I_B \otimes P_n$. By Lemma 1 we have
\begin{equation}\label{l-rel}
\lim_{n\to +\infty} I_n=H(\Phi \otimes \Id_R (\hat{\rho}_0) \| \Phi(\rho_0) \otimes
\rho_0)=I(\rho_0, \Phi)\le+\infty.
\end{equation}

We will prove that $\lim_{n\to +\infty} I_n=\lim_{n\to
+\infty} I(\rho_n, \Phi)$ by considering the difference
$I_n-I(\rho_n, \Phi)$. Since $H(\Phi \otimes
\Id_R(\hat{\rho}_n))<+\infty$, we have
\begin{multline*}
\begin{aligned}
I_n-I(\rho_n, \Phi)&=H(\Phi \otimes \Id_R (\hat{\rho}_n)\| \Phi(\rho_0) \otimes
\rho_n)-H(\Phi \otimes \Id_R (\hat{\rho}_n)\| \Phi({\rho_n}) \otimes {\rho_n})\\
&=-H(\Phi \otimes \Id_R(\hat{\rho}_n))-\Tr(\Phi \otimes \Id_R (\hat{\rho}_n))(\LOG
\Phi(\rho_0)\otimes \rho_n)\\
\end{aligned}\\
+H(\Phi \otimes \Id_R(\hat{\rho}_n))+\Tr(\Phi \otimes \Id_R (\hat{\rho}_n))\LOG
(\Phi(\rho_n)\otimes \rho_n)=A-B,
\end{multline*}
where
$$
\begin{aligned}
A &=-\Tr(\Phi \otimes \Id_R (\hat{\rho}_n))\LOG(\Phi(\rho_0)\otimes \rho_n),\\
B&=-\Tr(\Phi \otimes \Id_R (\hat{\rho}_n))\LOG(\Phi(\rho_n)\otimes \rho_n).
\end{aligned}
$$

We will use the following property of logarithm
\begin{equation}\label{log}
\LOG(\rho \otimes \sigma)=\LOG(\rho) \otimes I+I \otimes \LOG(\sigma),
\end{equation}
where in the case of not-full-rank states $\rho$ and $\sigma$ the
restrictions to the subspaces $\supp (\rho)$ and $\supp (\sigma)$
are kept in mind, that is
\begin{equation}
P_\rho\otimes P_{\sigma}\,(\LOG(\rho \otimes \sigma))= (P_\rho \LOG(\rho) P_\rho)
\otimes P_{\sigma}+P_\rho\otimes (P_{\sigma}\LOG(\sigma) P_{\sigma}),
\end{equation}
where $P_\rho$ and $P_{\sigma}$ are respectively the projectors onto
$\supp (\rho)$ and $\supp (\sigma)$. Since $P_{\Phi(\rho_n)}\le
P_{\Phi(\rho_0)}$, we have
$$
\begin{aligned}
A &=-\Tr(\Phi \otimes \Id_R(\hat{\rho}_n))(\LOG\Phi(\rho_0)\otimes I_R)-\Tr(\Phi
\otimes \Id_R(\hat{\rho}_n))(I_B \otimes \LOG(\rho_n))\\ &=-\Tr\Phi
(\rho_n)\LOG\Phi(\rho_0)+H(\rho_n).
\end{aligned}
$$
In the similar way we obtain
$$
\begin{aligned}
B&=-\Tr(\Phi \otimes \Id_R (\hat{\rho}_n))(\LOG\Phi(\rho_n)\otimes I_R)-\Tr(\Phi
\otimes \Id_R (\hat{\rho}_n))(I_B \otimes \LOG (\rho_n))\\
&=H(\Phi(\rho_n))+H(\rho_n).
\end{aligned}
$$
Hence,
$$
I_n-I(\rho_n, \Phi)=A-B =-\Tr~\Phi (\rho_n)\LOG\Phi(\rho_0)-H(\Phi(\rho_n))=H(\Phi
(\rho_n)\| \Phi(\rho_0)).
$$
By monotonicity of the relative entropy we have
$$
H(\Phi (\rho_n)\| \Phi(\rho_0))\le H(\rho_n\| \rho_0)=- \sum_{i=1}^n \frac
{\lambda_i} {\mu_n}\log\mu_n=- \log\mu_n.
$$
Since $\mu_n\to 1$ as $n\to +\infty$,  $\lim_{n\to
+\infty}(I_n-I(\rho_n,\Phi))=0$. This and (\ref{l-rel}) imply
$$
\lim_{n\to +\infty} I(\rho_n,\Phi)=I(\rho_0,\Phi)\le+\infty.
$$
\end{proof}

\begin{lemma}\label{approximation}
\emph{Let  $\Phi$ be an arbitrary channel from
$\,\mathfrak{S}(\mathcal{H}_A)$ to $\,\mathfrak{S}(\mathcal{H}_B)$
and $\{\Pi_{n}\}$ be a sequence of channels from
$\,\mathfrak{S}(\mathcal{H}_B)$ to $\,\mathfrak{S}(\mathcal{H}_B)$,
strongly converging to the identity channel. Let $\{\rho_{n}\}$ be a
sequence of states in $\,\mathfrak{S}(\mathcal{H}_A)$ converging to
a state $\rho_{0}$ such that $\lambda_n\rho_n\le\rho_0$ for some
sequence $\{\lambda_{n}\}$ converging to $\,1$. Then}
$$
\lim_{n\to+\infty}I(\rho_n,\Pi_n\circ\Phi)=I(\rho_0,\Phi).
$$
\end{lemma}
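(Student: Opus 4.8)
The plan is to establish the two one-sided bounds $\liminf_{n}I(\rho_n,\Pi_n\circ\Phi)\ge I(\rho_0,\Phi)$ and $\limsup_{n}I(\rho_n,\Pi_n\circ\Phi)\le I(\rho_0,\Phi)$ separately. For the lower bound I would invoke only lower semicontinuity. Since $\Pi_n\to\Id$ strongly, for every state $\sigma$ we have $(\Pi_n\circ\Phi)(\sigma)=\Pi_n(\Phi(\sigma))\to\Phi(\sigma)$, so the channels $\Pi_n\circ\Phi$ converge to $\Phi$ in the strong convergence topology; together with $\rho_n\to\rho_0$ this means the pairs $(\rho_n,\Pi_n\circ\Phi)$ converge to $(\rho_0,\Phi)$ in $\mathfrak{S}(\mathcal{H}_A)\times\mathfrak{F}(A,B)$. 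Lower semicontinuity of $(\rho,\Phi)\mapsto I(\rho,\Phi)$ from Proposition \ref{concave} then yields $\liminf_n I(\rho_n,\Pi_n\circ\Phi)\ge I(\rho_0,\Phi)$.

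For the upper bound the strong convergence of $\Pi_n$ plays no role; instead I would use the domination hypothesis $\lambda_n\rho_n\le\rho_0$. First, by the 1-th chain rule (monotonicity of the relative entropy) one has $I(\rho_n,\Pi_n\circ\Phi)\le I(\rho_n,\Phi)$ for every $n$, so it suffices to bound $I(\rho_n,\Phi)$. Taking traces in $\lambda_n\rho_n\le\rho_0$ gives $\lambda_n\le1$, and for $\lambda_n\in(0,1)$ the operator $\sigma_n=(1-\lambda_n)^{-1}(\rho_0-\lambda_n\rho_n)$ is a genuine state, so that $\rho_0=\lambda_n\rho_n+(1-\lambda_n)\sigma_n$ is a convex decomposition of $\rho_0$. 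Concavity of $\rho\mapsto I(\rho,\Phi)$ and nonnegativity of the mutual information then give
$$I(\rho_0,\Phi)\ge\lambda_n I(\rho_n,\Phi)+(1-\lambda_n)I(\sigma_n,\Phi)\ge\lambda_n I(\rho_n,\Phi),$$
hence $I(\rho_n,\Phi)\le\lambda_n^{-1}I(\rho_0,\Phi)$, an inequality which trivially persists when $\lambda_n=1$, since then $\rho_0-\rho_n\ge0$ has trace $0$ and so $\rho_n=\rho_0$.

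It remains to combine the two bounds. If $I(\rho_0,\Phi)<+\infty$, then since $\lambda_n\to1$ we obtain $\limsup_n I(\rho_n,\Pi_n\circ\Phi)\le\limsup_n\lambda_n^{-1}I(\rho_0,\Phi)=I(\rho_0,\Phi)$, which with the lower bound gives the claimed equality; if $I(\rho_0,\Phi)=+\infty$, the lower bound alone forces $\lim_n I(\rho_n,\Pi_n\circ\Phi)=+\infty=I(\rho_0,\Phi)$. The substantive step is the upper bound, and here I expect the main obstacle to be a false start rather than a technical difficulty: the natural temptation is to imitate Lemma \ref{i_n}, but that argument rests on $\dim\mathcal{H}_B<+\infty$ to keep the auxiliary entropies finite, whereas here $\mathcal{H}_B$ is arbitrary. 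The point I would stress is that the domination $\lambda_n\rho_n\le\rho_0$ turns the estimate into a one-line consequence of concavity, bypassing any manipulation of differences of possibly infinite entropies; checking that $\sigma_n$ is a state and that $\Pi_n\circ\Phi\to\Phi$ strongly are then the only routine verifications.
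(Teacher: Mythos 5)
Your proposal is correct and follows essentially the same route as the paper: the decomposition $\rho_0=\lambda_n\rho_n+(1-\lambda_n)\sigma_n$ combined with concavity, nonnegativity and the 1-th chain rule for the upper bound, and lower semicontinuity of $(\rho,\Phi)\mapsto I(\rho,\Phi)$ for the lower bound. The only (immaterial) difference is the order in which you apply the chain rule and concavity, and your explicit treatment of the cases $\lambda_n=1$ and $I(\rho_0,\Phi)=+\infty$, which the paper leaves implicit.
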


\begin{proof}
 It follows from the inequality $\lambda_n\rho_n\le\rho_0$ that $\rho_0=\lambda_n\rho_n+(1-\lambda_n)\sigma_n$, where $\sigma_n$
is a state in $\mathfrak{S}(\mathcal{H}_A)$.  Hence concavity and
nonnegativity of the mutual information and the 1-th chain rule imply
the inequality
$$
\lambda_nI(\rho_n,\Pi_n\circ\Phi)\le I(\rho_0,\Pi_n\circ\Phi)\le I(\rho_0,\Phi),
$$
showing that
$\limsup_{n\to+\infty}I(\rho_n,\Pi_n\circ\Phi)\le
I(\rho_0,\Phi)$. This and lower semicontinuity of the function
$(\rho,\Phi)\mapsto I(\rho,\Phi)$ imply the assertion of the
lemma.
\end{proof}

\section{The relation between mutual informations of complementary channels}

The main result of this section is an infinite dimensional
generalization of relation (\ref{main}) between mutual informations
of a pair of complementary channels (nontriviality of this result is
connected with a possible uncertainty ``$\infty -\infty$'' in
expressions (\ref{mutin2}) and (\ref{mutin})).

Let $\mathcal{H}_A$, $\mathcal{H}_B$, $\mathcal{H}_E$ be separable
Hilbert spaces and $V\colon \mathcal{H}_A\to \mathcal{H}_B\otimes
\mathcal{H}_E$ be an isometry, then relations (\ref{compl}) define a
pair of complementary channels $\Phi, \widetilde{\Phi}$ similar to
the finite dimensional case.\medskip

\begin{theorem} \emph{For an arbitrary state $\rho\in\mathfrak{S}(\mathcal{H}_A)$
the following relation holds:}
\begin{equation}\label{basic1}
I(\rho,\Phi)+I(\rho,\widetilde{\Phi})=2H(\rho).
\end{equation}
\end{theorem}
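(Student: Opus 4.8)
The plan is to prove (\ref{basic1}) first for input states of \emph{finite rank}, where every entropy occurring in (\ref{mutin1})--(\ref{main}) is finite and the elementary finite-dimensional computation is literally valid, and then to recover the general case by a truncation-and-limit argument in which all three quantities stay nonnegative, so that the ``$\infty-\infty$'' ambiguity never actually has to be confronted.

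For the finite-rank step I fix $\rho$ with $\rank\rho=n<\infty$ and put $\rho_{BER}=(V\otimes I_R)\ketbra{\varphi_\rho}{\varphi_\rho}(V^*\otimes I_R)$, a pure state. Since $V\rho V^*$ has rank $n$, its partial traces $\rho_B=\Phi(\rho)$ and $\rho_E=\widetilde{\Phi}(\rho)$ have rank $\le n$ and hence finite entropy; because $\rho_{BER}$ is pure the same holds for $\rho_{BR}$, $\rho_{ER}$ and $\rho_R=\rho$, and in fact all five operators are confined to a fixed finite-dimensional subspace $\mathcal K_B\otimes\mathcal K_E\otimes\mathcal K_R$, with $\supp\rho_{BR}\subseteq\supp(\rho_B\otimes\rho_R)$. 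On this subspace Definition \ref{m-inf}, applied to $\Phi$ and to $\widetilde{\Phi}$ (both purified through the same reference $R$, which is legitimate by Remark \ref{m-inf-r}), collapses to the finite-dimensional formula, giving $I(\rho,\Phi)=H(\rho_B)+H(\rho_R)-H(\rho_{BR})$ and $I(\rho,\widetilde{\Phi})=H(\rho_E)+H(\rho_R)-H(\rho_{ER})$ with all terms finite. Adding these and using the purity relations $H(\rho_B)=H(\rho_{ER})$ and $H(\rho_E)=H(\rho_{BR})$ (complementary reductions of a pure state have equal entropy), the output entropies cancel and I obtain $I(\rho,\Phi)+I(\rho,\widetilde{\Phi})=2H(\rho_R)=2H(\rho)$.

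For an arbitrary $\rho=\sum_i\lambda_i\ketbra{e_i}{e_i}$ I would take the spectral truncations $\rho_n$ of (\ref{spect}): these are finite-rank, converge to $\rho$, and satisfy $\mu_n\rho_n=P_n\rho P_n\le\rho$ with $\mu_n\to1$. This is exactly the hypothesis of Lemma \ref{approximation} with the trivial choice $\Pi_n=\Id$, a lemma resting only on concavity, nonnegativity and lower semicontinuity of the mutual information (Proposition \ref{concave}); applied to $\Phi$ and to $\widetilde{\Phi}$ it yields $I(\rho_n,\Phi)\to I(\rho,\Phi)$ and $I(\rho_n,\widetilde{\Phi})\to I(\rho,\widetilde{\Phi})$ in $[0,+\infty]$ (note that Lemma \ref{i_n} is unavailable here, since the outputs need not be finite-dimensional). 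Simultaneously $H(\rho_n)\to H(\rho)$: by Lemma 1, $H(P_n\rho P_n)\to H(\rho)$, while $H(P_n\rho P_n)=\mu_nH(\rho_n)$ by (\ref{H-fun-eq}) and $\mu_n\to1$. Passing to the limit in the finite-rank identity $I(\rho_n,\Phi)+I(\rho_n,\widetilde{\Phi})=2H(\rho_n)$ then delivers (\ref{basic1}); the passage is legitimate because all three sequences are nonnegative and convergent in $[0,+\infty]$, so adding the two limits entails no $\infty-\infty$.

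The genuine difficulty is precisely the one the identity is built to tame: in infinite dimensions each of $H(\rho_B),H(\rho_E),H(\rho_{BR}),H(\rho_{ER})$ may be infinite, so one cannot simply write out (\ref{mutin1}) and (\ref{mutin}) and cancel output entropies for the given $\rho$. My plan sidesteps this by performing that cancellation only for finite-rank inputs and transporting the result to the general case through the continuity of Lemma \ref{approximation}, never manipulating individually infinite entropies. Accordingly, the step demanding the most care is the finite-rank reduction itself — verifying the support inclusion and the confinement of all reductions to one common finite-dimensional space, so that Definition \ref{m-inf} genuinely reduces to the finite-dimensional entropy formula with finite terms; once that is secured, the remainder is routine bookkeeping.
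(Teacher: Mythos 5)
There is a genuine gap, and it sits exactly at the point you yourself flag as ``demanding the most care.'' Your finite-rank step rests on the claim that if $\rank\rho=n$ then $\rho_B=\Phi(\rho)$ and $\rho_E=\widetilde{\Phi}(\rho)$ have rank $\le n$, so that all reductions of the pure state $\rho_{BER}$ live in a common finite-dimensional subspace and the finite-dimensional cancellation of (\ref{mutin2}) and (\ref{mutin}) applies. This is false: a partial trace of a finite-rank operator on $\mathcal{H}_B\otimes\mathcal{H}_E$ need not have finite rank. Already for a pure input $\rho=\ketbra{e}{e}$ the vector $V\ket{e}$ may have infinitely many nonzero Schmidt coefficients across $B{:}E$, so $\Phi(\rho)$ and $\widetilde{\Phi}(\rho)$ can both have infinite rank and even infinite entropy (take for $\Phi$ a channel with constant output of infinite entropy). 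This is precisely the ``$\infty-\infty$'' uncertainty the theorem is designed to tame, and it is present already for finite-rank, indeed pure, inputs --- so it cannot be dispatched by restricting to finite rank. The paper's proof of the finite-rank case is accordingly not a two-line entropy cancellation: it truncates the Kraus sum to quantum operations $\Phi_n(\cdot)=\sum_{i=1}^n V_i(\cdot)V_i^*$, whose output on the purification \emph{is} finite rank, exploits the isomorphism between $\Phi_n\otimes\Id_R(\hat\rho)$ and $\widetilde{\Phi}_n(\rho)$ to identify the entropies that cancel, and controls the limits via Lemma~1 and Lemma~\ref{rel_ent}. Your proposal omits all of this, which is the actual mathematical content of the theorem.

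Your second step --- passing from finite-rank states to general states --- is, by contrast, sound and is a legitimate alternative to what the paper does. Invoking Lemma~\ref{approximation} with $\Pi_n=\Id$ for the spectral truncations $\rho_n$ (which satisfy $\mu_n\rho_n=P_n\rho P_n\le\rho$, $\mu_n\to1$) gives $I(\rho_n,\Phi)\to I(\rho,\Phi)$ and $I(\rho_n,\widetilde{\Phi})\to I(\rho,\widetilde{\Phi})$ in $[0,+\infty]$, while $H(\rho_n)\to H(\rho)$ follows from Lemma~1 and (\ref{H-fun-eq}); since everything is nonnegative, the limit of the sum is the sum of the limits. The paper instead appeals to Lemma~6 of \cite{Sh-11} (a concave lower semicontinuous lower bounded function on the state space is determined by its restriction to finite-rank states), which requires concavity of both sides but no explicit approximating sequence. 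Either route works once the finite-rank identity is actually established; it is the finite-rank identity itself that your argument fails to prove.
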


\begin{proof} Let
$\left\{\ket {h_i}\right\}_{i=1}^{+\infty}$ be an orthonormal basis
in the space $\mathcal{H}_E$, then
$$
V\ket\varphi= \sum_{i=1}^{+\infty} V_i\ket\varphi\otimes \ket{h_i},
$$
where $\;V_i\colon \mathcal{H}_A\to\mathcal{H}_B\;$ is a sequence of
bounded operators, satisfying the condition
$\sum_{i=1}^{+\infty} V_i^* V_i=I_A$, the channel $\Phi$ has the
Kraus representation  $\Phi(\rho)=\sum_{i=1}^{+\infty} V_i \rho
V_i^*$, the complementary channel $\widetilde{\Phi}$ has the
representation $\widetilde{\Phi}(\rho)=\sum_{i,j=1}^{+\infty}
\left[\Tr V_i \rho V_j^*\right] \ketbra {h_i} {h_j}$
(cf.\cite{H-c-c}).

Let $\rho=\sum_{i=1}^m \lambda_i |e_i\rangle\langle e_i|$ be
a finite rank state in $\mathfrak{S}(\mathcal{H}_A)$ and
$\hat{\rho}$ be its purification in $\mathfrak{S}(\mathcal{H}_A
\otimes \mathcal{H}_R)$. Consider the sequence of quantum
operations\footnote{A {\it quantum operation} is a linear completely
positive trace non-increasing map \cite{QSCI, NC}.} $\Phi_n(\rho) =
\sum_{i=1}^n V_i \rho V_i^*$. The sequence $\{\Phi_n\}$
strongly and monotonously converges to the channel $\Phi$ (that is
$\Phi_n(\rho)\le\Phi_{n+1}(\rho)$ for all $n$ and
$\rho\in\mathfrak{S}(\mathcal{H}_A)$).

Since $\Phi_n \otimes \Id_R(\hat{\rho})$ is a finite rank state, we
have
$$
\begin{aligned}
X_n&=H\left(\Phi_n \otimes \Id_R(\hat{\rho})\| \Phi(\rho)\otimes \rho\right)\\
&=-S(\Phi_n\otimes \Id_R(\hat{\rho}))-\Tr(\Phi_n \otimes \Id_R(\hat{\rho}))\LOG
(\Phi(\rho)\otimes \rho)+R_n,
\end{aligned}
$$
where $R_n=1-\Tr\Phi_n(\rho)\to 0$ as $n\to +\infty$. By Lemma
\ref{rel_ent} in the Appendix $\lim_{n\to +\infty}X_n=I(\rho,
\Phi)$. Since
$$
\Phi_n\otimes\Id_R(\hat{\rho})=\Tr_E\, (I_B\otimes P_n\otimes I_R)\cdot (V\otimes
I_R)\cdot \hat{\rho}\cdot (V^*\otimes I_R) \cdot (I_B\otimes P_n\otimes I_R),
$$
where $P_n=\sum_{i=1}^n \ketbra {h_i} {h_i}$ is a finite
dimensional projector in the space $\mathcal{H}_{E}$ and the partial
trace is taking in the space
$\mathcal{H}_B\otimes\mathcal{H}_E\otimes\mathcal{H}_R$, the
operator $\Phi_n\otimes\Id_R(\hat{\rho})$ is isomorphic to the
operator
$$
\widetilde{\Phi}_n(\rho)= \Tr_{BR}\,(I_B\otimes P_n\otimes I_R)\cdot (V\otimes
I_R)\cdot \hat{\rho}\cdot (V^*\otimes I_R)\cdot(I_B\otimes P_n\otimes I_R),
$$
where $\widetilde{\Phi}_n(\cdot)=P_{n}\widetilde{\Phi}(\cdot)P_{n}$
is the quantum operation complementary to the operation $\Phi_n$.
Thus $S(\Phi_n\otimes
\Id_R(\hat{\rho}))=S(\widetilde{\Phi}_n(\rho))$. By using
(\ref{log}) and by noting that $\Phi_n(\cdot)\le\Phi(\cdot)$ we
obtain
\begin{multline*}
-\Tr(\Phi_n \otimes \Id_R (\hat{\rho}))\LOG (\Phi(\rho)\otimes \rho)\\
\begin{aligned}
&= -\Tr(\Phi_n \otimes \Id_R (\hat{\rho}))(\LOG (\Phi(\rho)) \otimes I_R)-\Tr(\Phi_n
\otimes \Id_R (\hat{\rho}))(I_B\otimes\LOG(\rho))\\ &=-\Tr\Phi_n(\rho) \LOG
(\Phi(\rho))-\Tr\left(\Tr_B\Phi_n \otimes \Id_R(\hat{\rho})\right)\LOG(\rho).
\end{aligned}
\end{multline*}

Consider the value
$$
\begin{aligned}
Y_n&=H\bigl(\widetilde {\Phi}_n\otimes \Id_R (\hat{\rho})\|
\widetilde{\Phi}_n(\rho)\otimes\rho\bigr)\\ &=-S\bigl(\widetilde{\Phi}_n\otimes
\Id_R(\hat{\rho})\bigr)-\Tr(\widetilde{\Phi}_n \otimes \Id_R (\hat{\rho}))\LOG
(\widetilde{\Phi}_n(\rho)\otimes \rho).
\end{aligned}
$$
By Lemma 1  $\lim_{n\to +\infty}Y_n=I(\rho,
\widetilde{\Phi})$.  Similar to the calculations of the summands
of $X_n$ we obtain
$$
S\bigl(\widetilde{\Phi}_n \otimes \Id_R(\hat{\rho})\bigr)=S(\Phi_n(\rho))
$$
and
\begin{multline*}
-\Tr\bigl(\widetilde{\Phi}_n \otimes \Id_R
(\hat{\rho})\bigr)\LOG\bigl(\widetilde{\Phi}_n(\rho)\otimes \rho\bigr)\\
\begin{aligned}
&=-\Tr\bigl(\widetilde{\Phi}_n \otimes\Id_R
(\hat{\rho})\bigr)\bigl(\LOG(\widetilde{\Phi}_n(\rho))\otimes I_R\bigr)
-\Tr\bigl(\widetilde{\Phi}_n \otimes \Id_R (\hat{\rho})\bigr)(I_E \otimes \LOG
(\rho))\\ &= S\bigl(\widetilde{\Phi}_n(\rho)\bigr)-\Tr\bigl(\Tr_E \widetilde{\Phi}_n
\otimes \Id_R (\hat{\rho})\bigr)\LOG (\rho).
\end{aligned}
\end{multline*}

Let us show that  $\lim_{n\to +\infty} (X_n+Y_n)=2H(\rho)$. By the
definition of the relative entropy we have
$$
\begin{aligned}
X_n+Y_n&=-\Tr\Phi_n(\rho)\LOG(\Phi(\rho))-S(\Phi_n(\rho))+C_n+D_n+R_n\\ &=
H(\Phi_n(\rho)\| \Phi(\rho))+C_n+D_n,
\end{aligned}
$$
where
$$
C_n=-\Tr\left(\Tr_B \Phi_n \otimes \Id_R(\hat{\rho})\right)\LOG (\rho),\qquad D_n
=-\Tr\bigl(\Tr_E \widetilde{\Phi}_n \otimes \Id_R (\hat{\rho})\bigr)\LOG(\rho).
$$

Let us prove that $\lim_{n\to +\infty} C_n=\lim_{n \to +\infty}
D_n=H(\rho) $. By noting that
$$
\Tr_B \Phi_n \otimes \Id_R(\hat{\rho})= \sum_{i,j=1}^m\! \sqrt{\lambda_i \lambda_j}
\Tr\Phi_n (|e_i\rangle\langle e_j|)|e_i\rangle\langle e_j|,
$$
we obtain
$$
C_n=\sum_{i=1}^m (-\lambda_i\LOG \lambda_i) \Tr\Phi_n (|e_i\rangle\langle e_i|),
$$
and hence $\lim_{n\to +\infty}C_n=H(\rho)$, since
$\lim_{n \to +\infty}\Tr\Phi_n (|e_i\rangle\langle e_i|)=\!1$.
In the similar way one can prove that $\lim_{n\to
+\infty}D_n=H(\rho)$. Lemma \ref{rel_ent} in the Appendix implies
$$
\lim_{n\to +\infty} H(\Phi_n(\rho)\| \Phi(\rho))=0.
$$
Thus we have $\lim_{n\to +\infty}(X_n+Y_n)=2H(\rho)$. Since
$\lim_{n\to +\infty}X_n=I(\rho, \Phi)$ and $\lim_{n\to
+\infty}Y_n=I(\rho,\widetilde{\Phi})$, the assertion of the theorem
is proved for finite rank states. Since the left and the right sides
of relation (\ref{basic1}) are concave lower semicontinuous
nonnegative functions (by Proposition \ref{concave}), validity of
this relation for all states follows from lemma 6 in \cite{Sh-11},
stated that any concave lower semicontinuous lower bounded function
on the set of quantum states is uniquely determined by its
restriction to the set of finite rank states.
\end{proof}

\section{Coherent information}

Since in the infinite dimensional case the right side in definition
(\ref{c-inf}) of the coherent information $I_c(\rho,\Phi)$ may not
be defined even for the state $\rho$ with finite entropy while the
results of Section~4 show finiteness of the mutual information
$I(\rho,\Phi)$ for any such state $\rho$ and any channel $\Phi$, it
seems natural to use the following definition of the coherent
information for an infinite dimensional quantum channel.\medskip

\begin{definition}\label{c-inf+}
Let $\Phi\colon \mathfrak{S}(\mathcal{H}_A) \to
\mathfrak{S}(\mathcal{H}_B)$ be a quantum channel and $\rho$ be a
state in $\mathfrak{S}(\mathcal{H}_A)$ with finite entropy. The
\emph{coherent information} of the channel $\Phi$ at the state
$\rho$ is defined as follows
$$ I_c(\rho,\Phi)=I(\rho,\Phi)-H(\rho).
$$
\end{definition}

In the case $H(\rho)<+\infty$ and $H(\Phi(\rho))<+\infty$ this definition is consistent with the conventional one, since
$I(\rho,\Phi)=H(\rho)+H(\Phi(\rho))-H(\tilde{\Phi}(\rho))$ and hence
\begin{equation}\label{ic}
I_c(\rho,\Phi)=H(\Phi(\rho))-H(\tilde{\Phi}(\rho)).
\end{equation}

The above-defined value inherits properties 2,3 of the mutual
information (see Proposition \ref{concave}). Theorem 1 implies the inequalities
$$
-H(\rho)\le I_c(\rho,\Phi)\le H(\rho)
$$
and a generalization of identity (\ref{main1}) to the infinite
dimensional case. \medskip

\begin{corollary}\label{main_prop+}
\emph{Let $\Phi\colon \mathfrak{S}(\mathcal{H}_A)\to
\mathfrak{S}(\mathcal{H}_B)$ be a quantum channel and
$\widetilde{\Phi}\colon \mathfrak{S}(\mathcal{H}_A) \to
\mathfrak{S}(\mathcal{H}_E)$ be the channel complementary to the
channel $\Phi$. For an arbitrary state
$\rho\in\mathfrak{S}(\mathcal{H}_A)$ with finite entropy the
following relation holds:}
\begin{equation}\label{basic2}
I_c(\rho,\Phi)+I_c(\rho,\widetilde{\Phi})=0.
\end{equation}
\end{corollary}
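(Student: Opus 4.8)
The plan is to derive (\ref{basic2}) directly from the additive identity (\ref{basic1}) of Theorem~1 together with Definition \ref{c-inf+}, so that the argument reduces to a one-line substitution once the requisite finiteness is in place. First I would observe that the complementarity relation (\ref{compl}) is symmetric: if $\widetilde{\Phi}=\Tr_B V(\cdot)V^*$ is complementary to $\Phi=\Tr_E V(\cdot)V^*$, then $\Phi$ is in turn complementary to $\widetilde{\Phi}$, the roles of $\mathcal{H}_B$ and $\mathcal{H}_E$ being merely interchanged. Hence Theorem~1 applies to the isometry $V$ and yields $I(\rho,\Phi)+I(\rho,\widetilde{\Phi})=2H(\rho)$ for the given state $\rho$.

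Next I would check that the two coherent informations on the left of (\ref{basic2}) are both well-defined real numbers, which is the only point requiring any care. Since $\rho$ has finite entropy and is the common input of $\Phi$ and of its complement, Definition \ref{c-inf+} applies to both and gives $I_c(\rho,\Phi)=I(\rho,\Phi)-H(\rho)$ and $I_c(\rho,\widetilde{\Phi})=I(\rho,\widetilde{\Phi})-H(\rho)$. To guarantee that these differences do not involve an ``$\infty-\infty$'' ambiguity, I would invoke Proposition \ref{concave}: the mutual information is nonnegative, so $I(\rho,\Phi)\ge 0$ and $I(\rho,\widetilde{\Phi})\ge 0$, while their sum equals $2H(\rho)<+\infty$; consequently each term is finite.

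Finally I would add the two expressions from Definition \ref{c-inf+} and substitute (\ref{basic1}):
$$
I_c(\rho,\Phi)+I_c(\rho,\widetilde{\Phi})=\bigl(I(\rho,\Phi)-H(\rho)\bigr)+\bigl(I(\rho,\widetilde{\Phi})-H(\rho)\bigr)=2H(\rho)-2H(\rho)=0,
$$
which is exactly (\ref{basic2}). I do not expect a genuine obstacle here: the entire content of the corollary is already carried by Theorem~1, and the finite-entropy hypothesis is present solely to make Definition \ref{c-inf+} meaningful and to exclude the indeterminate form. The only step deserving a moment's attention is the finiteness verification, so I would record explicitly that nonnegativity together with $2H(\rho)<+\infty$ forces both mutual informations to be finite before performing the subtraction.
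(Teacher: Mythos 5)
Your proposal is correct and follows exactly the route the paper intends: the corollary is obtained by substituting Definition \ref{c-inf+} into the identity (\ref{basic1}) of Theorem~1, with finiteness of each mutual information secured by nonnegativity (Proposition \ref{concave}) together with $I(\rho,\Phi)+I(\rho,\widetilde{\Phi})=2H(\rho)<+\infty$. The paper leaves this as an immediate consequence of Theorem~1, and your explicit verification that no ``$\infty-\infty$'' ambiguity arises is precisely the one point worth recording.
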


\begin{remark}\label{c-inf+r}
An alternative expression for the coherent information of the
channel $\Phi$ at the state $\rho$ with finite entropy can be
obtained by using the relation of this quantity with the secret
classical capacity of a channel mentioned in \cite{Sch}. Consider
the $\chi$-function of the channel $\Phi$ defined as follows
$$
\chi_{\Phi}(\rho)=\sup\sum_i\pi_iH(\Phi (\rho_i)\| \Phi(\rho)),\quad
\rho\in\mathfrak{S}(\mathcal{H}),
$$
where the supremum is taken over all convex decompositions
$\rho=\sum_i\pi_i\rho_i$,
$\rho_i\in\mathfrak{S}(\mathcal{H})$. This function is closely
connected to the classical capacity of the channel $\Phi$
(cf.~\cite{QSCI}). If $H(\Phi(\rho))<+\infty$ then
$\chi_{\Phi}(\rho)=H(\Phi(\rho))-\overco H_{\Phi}(\rho)$, where
$\overco H_{\Phi}(\rho)$ is the convex closure of the output entropy
of the channel $\Phi$ (cf.~\cite{H-Sh}). Since $\overco
H_{\Phi}\equiv\overco H_{\widetilde{\Phi}}$ and
$|H(\Phi(\rho))-H(\widetilde{\Phi}(\rho))|\le H(\rho)$ by the
triangle inequality \cite{NC}, for an arbitrary state $\rho$ such
that $H(\rho)<+\infty$ and $H(\Phi(\rho))<+\infty$ we have
\begin{equation}\label{c-inf-new-def}
I_c(\rho,\Phi)=\chi_{\Phi}(\rho)-\chi_{\widetilde{\Phi}}(\rho).
\end{equation}

Since
$\max\{\chi_{\Phi}(\rho),\chi_{\widetilde{\Phi}}(\rho)\}\leq
H(\rho)$ by monotonicity of the relative entropy, the right side in
(\ref{c-inf-new-def}) is a correctly defined value in
$\,[-H(\rho),H(\rho)]\,$ under the single condition $H(\rho)<+\infty$ (for arbitrary value of $H(\Phi(\rho))$), which can be used for definition of
$I_c(\rho,\Phi)$.

\emph{Definition \ref{c-inf+} of the coherent information coincides with the definition
given by (\ref{c-inf-new-def}) for all states with finite entropy.} This assertion is proved in Example \ref{new-e} after the below
Proposition \ref{cont-cond+}. $\square$
\end{remark}\medskip

In finite dimensions the equality $H(\rho)=I_c(\rho,\Phi)$ is a
necessary and sufficient condition of perfect reversibility of the
channel $\Phi$ on the state $\rho$ (see \cite[Theorem 12.10]{NC}).
Generalize this to the infinite dimensional case.\medskip

\begin{definition}\label{dobr} A channel $\Phi$ is called \emph{perfectly reversible on a state}
$\rho$ in $\mathfrak{S}(\mathcal{H}_A)$ if there exists a channel
$\mathcal{D}\colon \mathfrak{S}(\mathcal{H}_B)\to
\mathfrak{S}(\mathcal{H}_A)$ such that
$$
{\mathcal D}\circ\Phi (\tilde{\rho})= \tilde{\rho}
$$
 for all states $\tilde{\rho}$ with supp $\tilde{\rho}\subset\mathcal{L}\equiv$ supp $\rho$.
\end{definition}\medskip

In other words the subspace $\mathcal{L}$ is a quantum code
correcting errors of the channel $\Phi$ \cite{NC}. Introduce the
reference system $\mathcal{H}_R$ and consider a purification
$\rho_{AR}=\ketbra {\varphi_{AR}}{\varphi_{AR}}\in
\mathfrak{S}(\mathcal{H}_A \otimes \mathcal{H}_R)$ of the state
$\rho$.\medskip

\begin{lemma}\label{tobr} \emph{A channel $\Phi$ is perfectly reversible on a state
$\rho\in\mathfrak{S}(\mathcal{H}_A)$ if and only if there exists a
channel $\mathcal{D}\colon \mathfrak{S}(\mathcal{H}_B)\to
\mathfrak{S}(\mathcal{H}_A)$ such that}
\begin{equation}\label{eobr}
(\mathcal{D} \circ \Phi \otimes \Id_R)(\rho_{AR})=\rho_{AR}.
\end{equation}
\end{lemma}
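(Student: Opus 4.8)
The plan is to fix the composite channel $\Psi=\mathcal{D}\circ\Phi\colon\mathfrak{S}(\mathcal{H}_A)\to\mathfrak{S}(\mathcal{H}_A)$ and to prove, \emph{for this single} $\mathcal{D}$, the equivalence of the two conditions ``$\Psi(\tilde\rho)=\tilde\rho$ for every state $\tilde\rho$ with $\supp\tilde\rho\subseteq\mathcal{L}$'' and $(\Psi\otimes\Id_R)(\rho_{AR})=\rho_{AR}$; since $\mathcal{D}$ is arbitrary, the two existential statements in Definition~\ref{dobr} and in the Lemma then coincide. Writing $\rho=\sum_i\lambda_i\ketbra{e_i}{e_i}$ with all $\lambda_i>0$ (so $\{\ket{e_i}\}$ is an orthonormal basis of $\mathcal{L}$) and $\ket{\varphi_{AR}}=\sum_i\sqrt{\lambda_i}\,\ket{e_i}\otimes\ket{e_i}$, the central observation is that \emph{both} conditions are equivalent to $\Psi(\ketbra{e_i}{e_j})=\ketbra{e_i}{e_j}$ for all $i,j$, i.e. to $\Psi$ acting as the identity on every trace-class operator supported in $\mathcal{L}$. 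The full Schmidt rank of $\rho_{AR}$ (all $\lambda_i>0$) is exactly what makes this reduction possible.

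For the ``only if'' direction I would start from invariance of every state supported in $\mathcal{L}$. Each matrix unit is recovered by the polarization identity
$$
\ketbra{e_i}{e_j}=\tfrac14\sum_{k=0}^{3}i^{k}\,\ketbra{e_i+i^{k}e_j}{e_i+i^{k}e_j},
$$
a finite combination of (unnormalized) pure states supported in $\mathcal{L}$, so linearity of $\Psi$ gives $\Psi(\ketbra{e_i}{e_j})=\ketbra{e_i}{e_j}$. Expanding $\rho_{AR}=\sum_{i,j}\sqrt{\lambda_i\lambda_j}\,\ketbra{e_i}{e_j}\otimes\ketbra{e_i}{e_j}$ and applying $\Psi\otimes\Id_R$ term by term then yields $(\Psi\otimes\Id_R)(\rho_{AR})=\rho_{AR}$; the interchange of $\Psi\otimes\Id_R$ with the infinite sum is justified by trace-norm continuity of a channel together with $\ketbra{\varphi_N}{\varphi_N}\to\rho_{AR}$ in trace norm for the truncations $\ket{\varphi_N}=\sum_{i\le N}\sqrt{\lambda_i}\,\ket{e_i}\otimes\ket{e_i}$.

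For the ``if'' direction I would invert this computation. Sandwiching the identity $(\Psi\otimes\Id_R)(\rho_{AR})=\rho_{AR}$ between $(I_A\otimes\bra{e_k})$ and $(I_A\otimes\ket{e_l})$ on the reference factor isolates a single term and gives $\sqrt{\lambda_k\lambda_l}\,\Psi(\ketbra{e_k}{e_l})=\sqrt{\lambda_k\lambda_l}\,\ketbra{e_k}{e_l}$; since $\lambda_k,\lambda_l>0$ we cancel to obtain $\Psi(\ketbra{e_k}{e_l})=\ketbra{e_k}{e_l}$ for all $k,l$. Finally, for any state $\tilde\rho$ with $\supp\tilde\rho\subseteq\mathcal{L}$, I approximate $\tilde\rho$ in trace norm by the truncations $P_N\tilde\rho P_N$ with $P_N=\sum_{i\le N}\ketbra{e_i}{e_i}$, apply $\Psi$ to each (a finite combination of fixed matrix units), and pass to the limit using trace-norm continuity of $\Psi$, obtaining $\Psi(\tilde\rho)=\tilde\rho$, which is precisely perfect reversibility of $\Phi$ on $\rho$.

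The polarization and the assembling of the tensor sum are routine; the point requiring care in the infinite-dimensional setting is the legitimacy of the limiting and ``reading off'' steps — namely that a channel is a trace-norm contraction, that $P_NXP_N\to X$ in trace norm for trace-class $X$ as $P_N\to I_\mathcal{L}$ strongly, and that the maps $T\mapsto(I_A\otimes\bra{e_k})\,T\,(I_A\otimes\ket{e_l})$ are trace-norm continuous, so that every interchange of sum, limit and channel used above is valid. I expect this to be the main (though not deep) obstacle; once these continuity facts are in place, the equivalence is immediate.
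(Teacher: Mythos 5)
Your proposal is correct and follows essentially the same route as the paper: for a fixed $\mathcal{D}$, the paper likewise reduces both conditions to the invariance of the matrix units $\ketbra{e_i}{e_j}$ via the spectral representation, the polarization identity, and the explicit purification (\ref{purif}). You merely spell out the trace-norm continuity and truncation arguments that the paper leaves implicit, which is a welcome but not substantively different elaboration.
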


The proof of this lemma is presented in the Appendix.\medskip

\begin{property}
\emph{Let $H(\rho)<\infty$. A channel $\Phi\colon
\mathfrak{S}(\mathcal{H}_A)\to \mathfrak{S}(\mathcal{H}_B )$ is
perfectly reversible on the state $\rho$ if and only if one of
following equivalent conditions holds: $ I_c(\rho, \Phi) =
H(\rho);\; I(\rho, \widetilde{\Phi})=0.$}
\end{property}

\begin{proof} By Theorem 1
and Definition \ref{c-inf+} we have
$$
H(\rho)-I_c(\rho,\Phi)=I(\rho, \widetilde {\Phi})\ge 0,
$$
where the equality  holds if and only if
$\rho_{RE} = \rho_R \otimes \rho_E$, since $I(\rho, \widetilde
{\Phi})= H(\rho_{RE}||\rho_R \otimes \rho_E)$. The following part of
the proof is similar to the proof presented in \cite{QSCI} and we
give it here for completeness.

\underline{Necessity}. Let $V\colon \mathcal{H}_A\to \mathcal{H}_B
\otimes \mathcal{H}_E$ be the isometry from representation
(\ref{compl}) of the channel $\Phi$. Consider the pure state
$\rho_{BRE}=\ketbra {\varphi_{BRE}} {\varphi_{BRE}}$, where $\ket
{\varphi_{BRE}}=(V \otimes I_R) \ket {\varphi_{AR}}$. Since the
channel $\Phi$ is perfectly reversible, there exists a channel
$\mathcal{D}$ such that (\ref{eobr}) holds and hence
$$
(\mathcal{D} \otimes \Id_{RE})(\rho_{BRE})=\rho_{ARE}.
$$
Since $\rho_{AR}$ is a pure state, we have $\rho_{ARE} = \rho_{AR}
\otimes \rho_E$. By taking partial traces over the space
$\mathcal{H}_A$, we obtain $\rho_{RE} = \rho_R \otimes \rho_E$.

\underline{Sufficiency}.  Consider the vector $\ket
{\varphi_{BRE}}=(V \otimes I_R) \ket {\varphi_{AR}}$. Then $\ket
{\varphi_{BRE}}$ is a purification vector for the state $\rho_{RE}$.
Since $\rho_{RE}=\rho_R \otimes \rho_E$, $\ket {\varphi_{AR}}
\otimes \ket {\varphi_{EE'}}$ is a purification vector for the state
$\rho_{RE}$, where $E'$ is a reference system for the system $E$.

Without loss of generality we can assume that the Hilbert spaces of
the both purifications are infinite dimensional, so that there
exists an isometry $W\colon \mathcal{H}_B\to \mathcal{H}_A \otimes
\mathcal{H}_{E'}$ such that
$$
(I_{RE} \otimes W)\ket {\varphi_{BRE}}=\ket {\varphi_{AR}} \otimes \ket
{\varphi_{EE'}},
$$
and respectively
$$
(I_{RE} \otimes W)\ket {\varphi_{BRE}} \bra {\varphi_{BRE}} (I_{RE} \otimes
W^*)=\ketbra {\varphi_{AR}} {\varphi_{AR}} \otimes \ketbra {\varphi_{EE'}}
{\varphi_{EE'}}.
$$
By taking partial traces over the spaces $\mathcal{H}_E$ and
$\mathcal{H}_{E'}$, we obtain perfect reversibility condition
(\ref{eobr}), where
$$
\mathcal{D}(\sigma)=\Tr_{E'} W \sigma W^*,\quad \sigma\in \mathfrak{S}
(\mathcal{H}_B).
$$
\end{proof}

As mentioned before, the entropy $H(\rho)$ of a state $\rho$ is
the upper bound for the  coherent information $I_c(\rho,\Phi)$ of an
arbitrary channel $\Phi$ at this state. The following proposition
gives the more precise upper bound for $I_c(\rho,\Phi)$, expressed
via the Kraus operators of the channel $\Phi$.\medskip

\begin{property}\label{a-a-e-lemma} \emph{Let $\Phi(\cdot)=\sum_{i=1}^{+\infty} V_i(\cdot)V_i^*$
be a quantum channel. Then for an arbitrary state $\rho$ with finite
entropy the following inequality holds
\begin{equation}\label{a-a-e}
I_c(\rho,\Phi)\le\sum_{i=1}^{+\infty} H(V_i\rho V_i^{*})=\sum_{i=1}^{+\infty} \Tr V_i\rho V_i^{*}H\left(\frac{V_i\rho V_i^{*}}{\Tr V_i\rho V_i^{*}}\right).
\end{equation}
The equality holds in (\ref{a-a-e})  if  $\,\Ran V_i\perp\Ran V_j\,$ for
all $i\ne j$.}
\end{property}\medskip

The expression in the right side of (\ref{a-a-e}) can be considered
as the mean entropy of a posteriori state in quantum measurement,
described by the collection of operators
$\left\{V_i\right\}_{i=1}^{+\infty}$, at a priory  state $\rho$
\cite{QSCI,NC}. By the Groenevold-Lindblad-Ozawa
inequality this value does not exceed $H(\rho)$ \cite{O}.

\begin{proof} Show first that the equality holds in
(\ref{a-a-e}) if  $\mathrm{Ran}V_{i}\perp\mathrm{Ran}V_{j}$ for all
$i\ne j$.

Let  $\rho$ be a state in $\mathfrak{S}(\mathcal{H}_A)$ with finite
entropy and $\ket{\varphi}\bra {\varphi}$ be its purification in
$\mathfrak{S}(\mathcal{H}_A\otimes\mathcal{H}_R)$.  By using the
well known properties of the relative entropy (see \cite{L}) and by
noting that $\sum_{i=1}^{+\infty} V^*_iV_i=I_A$ we obtain
$$
\begin{aligned}
I(\rho,\Phi)&=H\left(\Phi \otimes \Id_R (\ket {\varphi} \bra {\varphi}) \| \, \Phi
\otimes \Id_R (\rho\otimes\rho)\right)\\ &=H\left(\sum_{i=1}^{+\infty} V_i\otimes I_R
\ket {\varphi} \bra {\varphi} V^*_i\otimes I_R\mathrel{\Big\|}
\sum_{i=1}^{+\infty}(V_i\otimes I_R) (\rho\otimes \rho)(V^*_i\otimes I_R)\right)\\
&=\sum_{i=1}^{+\infty} H\left(V_i\otimes I_R \ket {\varphi} \bra {\varphi} V^*_i\otimes
I_R\| (V_i\otimes I_R) (\rho\otimes \rho) (V^*_i\otimes I_R)\right)\\
&=H(\rho)+\sum_{i=1}^{+\infty}\left[S(V_i\rho V_i^*)-\eta(\Tr V_i\rho
V_i^*)\right]=H(\rho)+\sum_{i=1}^{+\infty} H(V_i\rho V_i^*).
\end{aligned}
$$

Let $\Phi(\cdot)=\sum_{i=1}^{+\infty} V_i(\cdot)V_i^*$ be an
arbitrary channel and
$\mathcal{H}_C=\bigoplus_{i=1}^{+\infty}\mathcal{H}^i_B$, where
$\mathcal{H}^i_B\cong\mathcal{H}_B$. Let $U_i$ be an isometrical
embedding of $\mathcal{H}_B$ in $\mathcal{H}_C$ such that
$U_i\mathcal{H}_B=\mathcal{H}^i_B$ for each $i$.

As proved before, for the quantum channel
$\widehat{\Phi}(\cdot)=\sum_{i=1}^{+\infty}
U_iV_i(\cdot)V_i^*U_i^*$ the following equality holds
$$
I_c(\rho,\widehat{\Phi})=\sum_{i=1}^{+\infty} H(U_iV_i\rho V_i^*U_i^*)=\sum_{i=1}^{+\infty}
H(V_i\rho V_i^*),
$$
By applying the 1-th chain rule for the coherent information to the
composition $\Psi\circ\widehat{\Phi}=\Phi$, where
$\Psi(\cdot)=\sum_{i=1}^{+\infty} U_i^*(\cdot)U_i$ is a channel
from $\mathfrak{S}(\mathcal{H}_C)$ to $\mathfrak{S}(\mathcal{H}_B)$,
we obtain (\ref{a-a-e}) from the above equality.
\end{proof}

\section{On continuity of mutual and coherent informations}

Proposition \ref{concave} and Theorem 1 provide the following
continuity condition for mutual  and  coherent informations.\medskip

\begin{property}\label{cont-cond}
\emph{For an arbitrary quantum channel $\,\Phi\colon
\mathfrak{S}(\mathcal{H}_A)\to \mathfrak{S}(\mathcal{H}_B)$ the
functions
$$
\rho\mapsto I(\rho,\Phi)\qquad \text{and\/}\qquad\rho\mapsto I_c(\rho,\Phi)
$$
are continuous on any subset
$\mathcal{A}\subset\mathfrak{S}(\mathcal{H}_A)$, on which the von
Neumann entropy is continuous.}
\end{property}

\begin{proof}
By Proposition \ref{concave} the functions $\rho\mapsto
I(\rho,\Phi)$ and
 $\rho\mapsto I(\rho,\widetilde{\Phi})$ are lower semicontinuous while
by Theorem 1 their sum coincides with the double von Neumann
entropy, which is continuous on the set $\mathcal{A}$ by the
condition. Hence these functions are continuous on the set
$\mathcal{A}$. The function $\rho\mapsto I_c(\rho,\Phi)$ is
continuous on the set $\mathcal{A}$ as a difference between two
functions continuous on this set.
\end{proof}

\begin{example}\label{cont-cond-e}
Let $H$ be a Hamiltonian of quantum system $A$. Then the subset
$\mathcal{K}_{H,h}$ of $\mathfrak{S}(\mathcal{H}_A)$, consisting of
states $\rho$ such that $\Tr H\rho\le h$, can be treated as a set of
states with the mean energy not exceeding $h$. If the operator $H$
is such that $\Tr e^{-\lambda H}<+\infty$ for all $\lambda>0$ then
the von Neumann entropy is continuous on $\mathcal{K}_{H,h}$
\cite{O&P,W}. This holds, for example, for the Hamiltonian of the
system of quantum oscillators \cite{W}. By Proposition
\ref{cont-cond} for an arbitrary quantum channel $\Phi$ the mutual
information $I(\rho,\Phi)$ and the coherent information
$I_c(\rho,\Phi)$ are continuous functions of a state
$\rho\in\mathcal{K}_{H,h}$ for each finite $h>0$. Hence these
functions are bounded and achieve their supremum on the set
$\mathcal{K}_{H,h}$ (by compactness of this set \cite{H-Sh-2}).
\end{example}\medskip

\begin{remark}\label{cont-cond-r}
Proposition \ref{cont-cond} and identity (\ref{basic1}) show that
for an arbitrary channel $\Phi$ the function $\rho\mapsto
I(\rho,\Phi)$ is continuous and bounded on the set
$$
\mathfrak{S}_k(\mathcal{H}_A)
=\{\rho\in\mathfrak{S}(\mathcal{H}_A)\mid\rank\rho\le k\}
$$
for each $k=1,2,...$ Hence the properties of the function
$\rho\mapsto I(\rho,\Phi)$ can be explored by using the
approximation method considered in \cite[Section 4]{Sh-11}. This
method makes it possible to clarify the sense of the continuity
condition for the function $\rho\mapsto I(\rho,\Phi)$ in Proposition
\ref{cont-cond} and to show its necessity for the particular class
of channels.

By Proposition 3 in \cite{Sh-11} the function $\rho\mapsto
I(\rho,\Phi)$ is a pointwise limit of the increasing sequence of
concave continuous on the set  $\mathfrak{S}(\mathcal{H}_A)$
functions
\begin{equation}\label{approximator} \rho\mapsto
I_k(\rho,\Phi)=\sup_{\{\pi_i,\rho_i\}\in\mathcal{E}^k_\rho}\sum_i\pi_iI(\rho_i,\Phi),
\end{equation}
where $\mathcal{E}^k_\rho$ is the set of all ensembles\footnote{An ensemble
$\{\pi_{i},\rho_{i}\}$ is a collection of states $\{\rho_{i}\}$ with
the corresponding probability distribution $\{\pi_{i}\}$.}
$\{\pi_{i},\rho_{i}\}$ such that $\sum_i\pi_i\rho_i=\rho$ and
$\rank\rho_i\le k$ for all $i$. Hence a
necessary and sufficient condition of continuity of the function
$\rho\mapsto I(\rho,\Phi)$ on a set
$\mathcal{A}\subset\mathfrak{S}(\mathcal{H}_A)$ can be expressed as
follows
\begin{equation}\label{I-cont-cond}
\lim_{k\to+\infty}\sup_{\rho\in\mathcal{A}_c}\Delta_k^{I}(\rho,\Phi)=0\quad
\text{for any compact set}\; \mathcal{A}_c\subseteq\mathcal{A},
\end{equation}
where $\Delta_k^{I}(\rho,\Phi)=I(\rho,\Phi)-I_k(\rho,\Phi)$. It is
possible to show that
$$
\Delta_k^{I}(\rho,\Phi)
=\inf_{\{\pi_i,\rho_i\}\in\mathcal{E}^k_\rho}\sum_i\pi_i\left[H(\rho_i\|
\rho)+H(\Phi(\rho_i)\| \Phi(\rho)) -H(\widetilde{\Phi}(\rho_i)\|
\widetilde{\Phi}(\rho))\right]
$$
for any state $\rho$ with finite entropy. By monotonicity and
nonnegativity of the relative entropy the expression in the square
brackets does not exceed $2H(\rho_i\| \rho)$. Thus
(\ref{I-cont-cond}) holds if
$$
\lim_{k\to+\infty}\sup_{\rho\in\mathcal{A}_c}\inf_{\{\pi_i,\rho_i\}
\in\mathcal{E}^k_\rho}\sum_i\pi_iH(\rho_i\| \rho)=0\quad \text{for any compact set}\;
\mathcal{A}_c\subseteq\mathcal{A},
$$
which is equivalent to continuity of the entropy on the set
$\mathcal{A}$, since it means uniform convergence of the  sequence
$\{H_{k}\}$ of continuous approximators of the entropy (defined by
formula (\ref{approximator}) with $H(\rho_i)$ instead of
$I(\rho_i,\Phi)$ in the right side) on compact subsets of
$\mathcal{A}$ \cite{Sh-11}. \smallskip

Thus  \emph{the assertion of Proposition \ref{cont-cond} is
explained by the implication
\begin{equation}\label{implication}
H_k(\rho)\mathop{\rightrightarrows}^{\mathcal{A}}
H(\rho)<+\infty\;\,\Rightarrow\;\,
I_k(\rho,\Phi)\mathop{\rightrightarrows}^{\mathcal{A}}
I(\rho,\Phi)<+\infty\;\;\; \forall\mathcal{A}\subset\mathfrak{S}(\mathcal{H}_A)\!
\end{equation}
valid for any channel $\Phi$ by monotonicity of the relative
entropy.}\smallskip

If $\Phi$ is a degradable channel, that is
$\widetilde{\Phi}=\Lambda\circ\Phi$ for some channel $\Lambda$, then
$I(\rho,\Phi)<+\infty\Rightarrow H(\rho)<+\infty$ by Theorem 1 and
the 1-th chain rule from Proposition \ref{concave}, while the
expression in the square brackets in the above formula for
$\Delta_{k}^{I}(\rho,\Phi)$ is not less than $H(\rho_i\|
\rho)$ by monotonicity of the relative entropy. Thus for degradable
channel $\Phi$ $"\Leftrightarrow"$ holds in  (\ref{implication}) and
hence the continuity condition for the function $\rho\mapsto
I(\rho,\Phi)$ in Proposition \ref{cont-cond} is necessary and
sufficient:
$$
\lim_{n\to+\infty} H(\rho_n)=H(\rho_0)<+\infty\quad\Longleftrightarrow\quad
\lim_{n\to+\infty} I(\rho_n,\Phi)=I(\rho_0,\Phi)<+\infty
$$
for any  sequence $\{\rho_n\}$ converging to a state $\rho_0$.
\end{remark}

To explore continuity of capacities as functions of a channel it is
necessary to consider the corresponding entropic characteristics as
functions of a pair (input state, channel), that is as functions on
the Cartesian product  of the set of all input states
$\mathfrak{S}(\mathcal{H}_A)$ and the set of all channels
$\mathfrak{F}(A,B)$ from  $A$ to $B$ endowed with the appropriate
(sufficiently weak) topology. As shown in \cite{H-Sh}, for this
purpose it is reasonable to use the strong convergence topology on
the set $\mathfrak{F}(A,B)$ described before Proposition
\ref{concave} in Section~3. By using the obvious modification of the
arguments in the proof of Proposition \ref{cont-cond} one can derive
from Proposition \ref{concave} and Theorem 1 the following result. \medskip
\begin{property}\label{cont-cond+}
\emph{Let $\,\{\Phi_{n}\}$ be a sequence of channels in
$\,\mathfrak{F}(A,B)$ strongly converging to a channel $\,\Phi_{0}$
and there exists a sequence $\,\{\widetilde{\Phi}_{n}\}$ of channels
in $\,\mathfrak{F}(A,E)$ strongly converging to a channel
$\,\widetilde{\Phi}_{0}$ such that $(\Phi_{n},\widetilde{\Phi}_{n})$
is a complementary pair for each $n=0,1,2,\ldots\strut$. Then the
relations
\begin{equation}\label{lim-exp}
\lim_{n\to+\infty}I(\rho_n,\Phi_n)=I(\rho_0,\Phi_0)\qquad\text{and\/}\qquad
\lim_{n\to+\infty}I_c(\rho_n,\Phi_n)=I_c(\rho_0,\Phi_0)
\end{equation}
hold for any sequence  $\,\{\rho_{n}\}$ of states in
$\mathfrak{S}(\mathcal{H}_A)$ converging to the state $\rho_{0}$
such that $\,\lim_{n\to+\infty}H(\rho_n)=H(\rho_0)<+\infty$.}
\end{property}\medskip

\begin{example}\label{new-e}
By using Proposition \ref{cont-cond+} one can prove representation (\ref{c-inf-new-def}) for any state $\rho$ with finite entropy as follows.
Let $\Phi(\cdot)=\sum_{i=1}^{+\infty}V_{i}(\cdot)V^{*}_{i}$. Consider the sequence
of channels
$\Phi_n(\cdot)=\sum_{i=1}^{n}V_{i}(\cdot)V^{*}_{i}+W_n(\cdot)W_n$,
where $W_n=\sqrt{I_A-\sum_{i=1}^{n}V^{*}_{i}V_{i}}$. By noting that the sequence $\{W_n\}$ converges to the zero operator in the strong operator topology it is easy to show that the sequences $\{\Phi_n\}$ and $\{\widetilde{\Phi}_n\}$ strongly converges to the channels $\Phi$ and $\widetilde{\Phi}$.
Hence Proposition \ref{cont-cond+} implies
$$
\lim_{n\to+\infty}I_c(\rho,\Phi_n)=I_c(\rho,\Phi).
$$
Let $\Psi_n(\cdot)=\sum_{i=1}^{n}V_{i}(\cdot)V^{*}_{i}$ and $\Theta_n(\cdot)=W_n(\cdot)W_n$ be quantum operations.
By Proposition 6B in \cite{H-Sh} we have $\,\lim_{n\to+\infty}\chi_{\Psi_n}(\rho)=\chi_{\Phi}(\rho)$ while Corollary 8 in \cite{Sh-11} implies $\lim_{n\to+\infty}\chi_{\Theta_n}(\rho)=\lim_{n\to+\infty}H(W_n\rho W_n)=0$. By using Corollary 3 in \cite{H-Sh} we conclude that
$\lim_{n\to+\infty}\chi_{\Phi_n}(\rho)=\chi_{\Phi}(\rho)$. Since $H(\rho)<+\infty$ implies
$H(\Phi_n(\rho))<+\infty$ for all $n$ we have (see Remark \ref{c-inf+r})
$$
I_c(\rho,\Phi_n)=\chi_{\Phi_n}(\rho)-\chi_{\widetilde{\Phi}_n}(\rho).
$$
By the above observations and Corollary 3 in \cite{H-Sh} passing to the limit in this equality leads to the inequality
$$
I_c(\rho,\Phi)\leq\chi_{\Phi}(\rho)-\chi_{\widetilde{\Phi}}(\rho).
$$
By using  the same approximation for the channel $\widetilde{\Phi}$ instead of $\Phi$ and repeating the above arguments we obtain the converse inequality. $\square$
\end{example}\medskip

Let $\mathfrak{V}_{1}(A,B)$ be the set of all sequences
$\overline{V}=\left\{V_i\right\}_{i=1}^{+\infty}$ of operators from
$\mathcal{H}_A$ into $\mathcal{H}_B$ such that
$\sum_{i=1}^{+\infty}V^*_iV_i=I_A$ endowed with the topology
of coordinate\nobreakdash-\hspace{0pt}wise strong operator
convergence.\medskip
\begin{corollary}\label{cont-cond+c}
\emph{For an arbitrary subset
$\mathcal{A}\subset\mathfrak{S}(\mathcal{H}_A)$, on which the von
Neumann entropy is continuous, the functions
$$
\bigl(\rho, \overline{V}\bigr)\mapsto I\bigl(\rho,\Phi[\overline{V}]\bigr),\qquad
\bigl(\rho, \overline{V}\bigr)\mapsto I_c\bigl(\rho,\Phi[\overline{V}]\bigr),\qquad
\bigl(\rho, \overline{V}\bigr)\mapsto \sum_{i=1}^{+\infty}H(V_i\rho V_i^*),
$$
where
$\Phi[\overline{V}](\cdot)=\sum_{i=1}^{+\infty}V_i(\cdot)V_i^*$\textup,
are continuous on the set $\mathcal{A}\times\mathfrak{V}_1(A,B)$.
}\end{corollary}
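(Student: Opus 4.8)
The plan is to reduce Corollary \ref{cont-cond+c} to Proposition \ref{cont-cond+} by realizing the continuity claim as a sequential statement and then producing, for each convergent sequence, a complementary pair of channel-sequences to which the proposition applies. Concretely, fix a subset $\mathcal{A}$ on which the von Neumann entropy is continuous, take an arbitrary sequence $(\rho_n,\overline{V}^{(n)})$ in $\mathcal{A}\times\mathfrak{V}_1(A,B)$ converging to $(\rho_0,\overline{V}^{(0)})$, so that $\rho_n\to\rho_0$ in trace norm (hence $\lim_n H(\rho_n)=H(\rho_0)<+\infty$ by continuity of the entropy on $\mathcal{A}$) and each coordinate operator converges strongly, $V^{(n)}_i\to V^{(0)}_i$. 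I would set $\Phi_n=\Phi[\overline{V}^{(n)}]$ and build the complementary channels $\widetilde{\Phi}_n$ from the same Kraus data via the explicit formula $\widetilde{\Phi}_n(\rho)=\sum_{i,j}[\Tr V^{(n)}_i\rho (V^{(n)}_j)^*]\ketbra{h_i}{h_j}$ appearing in the proof of Theorem 1, so that $(\Phi_n,\widetilde{\Phi}_n)$ is a complementary pair for every $n$ including $n=0$.

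The first key step is to verify that $\Phi_n$ and $\widetilde{\Phi}_n$ strongly converge to $\Phi_0$ and $\widetilde{\Phi}_0$. For a fixed state $\rho$ with finite rank this is immediate from $V^{(n)}_i\to V^{(0)}_i$ strongly together with the normalization $\sum_i (V^{(n)}_i)^*V^{(n)}_i=I_A$, which controls the tail of the Kraus sum uniformly; a density/approximation argument then extends the trace-norm convergence $\Phi_n(\rho)\to\Phi_0(\rho)$ to all $\rho\in\mathfrak{S}(\mathcal{H}_A)$, and the analogous estimate handles $\widetilde{\Phi}_n(\rho)\to\widetilde{\Phi}_0(\rho)$. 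Once strong convergence of both sequences is in hand, Proposition \ref{cont-cond+} applies verbatim and gives $\lim_n I(\rho_n,\Phi_n)=I(\rho_0,\Phi_0)$ and $\lim_n I_c(\rho_n,\Phi_n)=I_c(\rho_0,\Phi_0)$, which are the first two continuity assertions.

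For the third function $\bigl(\rho,\overline{V}\bigr)\mapsto\sum_i H(V_i\rho V_i^*)$ I would use the identity obtained in the proof of Proposition \ref{a-a-e-lemma}, namely $\sum_i H(V_i\rho V_i^*)=I_c(\rho,\widehat{\Phi})$ for the channel $\widehat{\Phi}(\cdot)=\sum_i U_iV_i(\cdot)V_i^*U_i^*$ into the direct sum $\mathcal{H}_C=\bigoplus_i\mathcal{H}^i_B$ with orthogonal ranges. Since this $\widehat{\Phi}$ again has Kraus operators depending strongly-continuously on $\overline{V}$, its coherent information is continuous on $\mathcal{A}\times\mathfrak{V}_1(A,B)$ by the already-established part of the corollary applied to $\widehat{\Phi}[\overline{V}]$; this transfers continuity to the sum-of-entropies function.

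The main obstacle I anticipate is the strong-convergence verification in the first step: coordinate-wise strong convergence of the $V_i$ plus the normalization constraint does not instantly yield convergence of the infinite Kraus sums in trace norm, and one must argue that the operator-tail $\sum_{i>N}V^{(n)}_i\rho (V^{(n)}_i)^*$ is uniformly small in $n$. This is exactly where the condition $\sum_i (V^{(n)}_i)^*V^{(n)}_i=I_A$ is essential, giving $\Tr\sum_{i>N}V^{(n)}_i\rho(V^{(n)}_i)^*=\Tr\bigl(\rho\sum_{i>N}(V^{(n)}_i)^*V^{(n)}_i\bigr)$ with a uniform bound that one controls by approximating $\rho$ by finite-rank states; the same bound controls the off-diagonal tails in $\widetilde{\Phi}_n$ via Cauchy--Schwarz. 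Everything else is a routine application of Proposition \ref{cont-cond+} and the identity from Proposition \ref{a-a-e-lemma}.
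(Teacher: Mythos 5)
Your proposal is correct and follows essentially the same route as the paper: reduce the first two claims to Proposition \ref{cont-cond+} by verifying strong convergence of $\Phi[\overline{V}_n]$ and of the explicitly constructed complementary channels $\widetilde{\Phi}[\overline{V}_n]$ (with the tail of the Kraus sum controlled uniformly via the normalization $\sum_i V_i^*V_i=I_A$ together with coordinate-wise strong convergence), and then obtain the third claim from the orthogonal-range embedding $\hat{V}_i=U_iV_i$ and the equality case of Proposition \ref{a-a-e-lemma}. The only cosmetic difference is that the paper checks convergence on pure states rather than finite-rank states, which is an equivalent reduction.
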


\begin{proof} By Proposition \ref{cont-cond+} continuity of the first two functions follows from continuity of the maps
$$
\mathfrak{V}_1(A,B)\ni\overline{V}\mapsto
\Phi[\overline{V}]\in\mathfrak{F}(A,B)\quad \textup{and}\quad
\mathfrak{V}_1(A,B)\ni\overline{V}\mapsto
\widetilde{\Phi}[\overline{V}]\in\mathfrak{F}(A,E),
$$
where
$\widetilde{\Phi}[\overline{V}](\cdot)=\sum_{i,j=1}^{+\infty}[\Tr
V_i(\cdot)V_j^*]|h_i\rangle\langle h_j|$ and $\{|h_i\rangle\}$ is a
particular orthonormal basis in $\mathcal{H}_E$.

To prove continuity of the above maps it suffices to show that
\begin{equation}\label{maps-1}
\lim_{n\to+\infty}\Phi[\overline{V}_n](|\varphi\rangle\langle\varphi|)
=\Phi[\overline{V}_0](|\varphi\rangle\langle\varphi|)
\end{equation}
and
\begin{equation}\label{maps-2}
\lim_{n\to+\infty}\widetilde{\Phi}[\overline{V}_n](|\varphi\rangle\langle\varphi|)
=\widetilde{\Phi}[\overline{V}_0](|\varphi\rangle\langle\varphi|)
\end{equation}
for any sequence $\{\overline{V}_{n}\}\subset\mathfrak{V}_{1}(A,B)$,
converging to a vector $\overline{V}_0\in\mathfrak{V}_1(A,B)$, and
for any unit vector $\varphi\in\mathcal{H}_A$.

Let $\overline{V}_n=\left\{V^n_i\right\}_{i=1}^{+\infty}$ for each
$n\ge0$. Relation (\ref{maps-1}) can be proved by noting that the
condition $\sum_{i=1}^{+\infty}\|V^n_i|\varphi\rangle\|^2=1$
for all $n\ge0$ implies
$$
\lim_{m\to+\infty}\sup_{n\ge0}\Tr\sum_{i>m}V^n_i\,|\varphi\rangle\langle\varphi|(V^n_i)^*
=\lim_{m\to+\infty}\sup_{n\ge0}\sum_{i>m}\|V^n_i|\varphi\rangle\|^2=0.
$$
Relation (\ref{maps-2}) is easily proved by using the result from
\cite{d}, mentioned before Proposition  \ref{concave}.

To prove continuity of the third  function consider the following
construction. Let
$\mathcal{H}_C=\bigoplus_{i=1}^{+\infty}\mathcal{H}^i_B$,
where $\mathcal{H}^i_B\cong\mathcal{H}_B$, and $U_i$ be an
isometrical embedding of $\mathcal{H}_B$ into $\mathcal{H}_C$ such
that $U_{i}\mathcal{H}_B=\mathcal{H}^{i}_B$ for each $i$.

For an arbitrary sequence $\left\{V_i\right\}_{i=1}^{+\infty}$ in
$\mathfrak{V}_1(A,B)$ one can take the sequence
$\left\{\hat{V}_i=U_iV_i\right\}_{i=1}^{+\infty}$ in
$\mathfrak{V}_1(A,C)$ such that $\Ran\hat{V}_i\perp\Ran\hat{V}_j$
for all $i\ne j$. Since the above correspondence is continuous (as a
map from $\mathfrak{V}_{1}(A,B)$ into $\mathfrak{V}_{1}(A,C)$) the
above observation shows continuity on the set
$\mathcal{A}\times\mathfrak{V}_{1}(A,B)$ of the function
$$
\bigl(\rho,\overline{V}\bigr)\mapsto
I_c\bigl(\rho,\widehat{\Phi}[\overline{V}]\bigr)=\sum_{i=1}^{+\infty}
H\bigl(\hat{V}_i\rho \hat{V}_i^*\bigr)=\sum_{i=1}^{+\infty}H(V_i\rho V_i^*),
$$
where $\widehat{\Phi}[\overline{V}](\cdot)
=\sum_{i=1}^{+\infty}\hat{V}_i(\cdot)\hat{V}_i^*$ and the
first equality follows from the last assertion of Proposition
\ref{a-a-e-lemma}.
\end{proof}

As mentioned in Section~5, the value
$\sum_{i=1}^{+\infty}H(V_i\rho V_i^*)$ can be considered as
the mean entropy of a posteriori state in the quantum measurement,
described by the collection of operators
$\left\{V_i\right\}_{i=1}^{+\infty}$. Corollary \ref{cont-cond+c}
shows that continuity of the entropy $H(\rho)$ of a priory state
$\rho$ implies continuity of the mean entropy of a posteriori state
as a function of a pair (a priori state, measurement) provided the
strong operator topology is used in the definition of convergence of
a sequence of measurements. This assertion strengthens the analogous
assertion in Example 3 in \cite{Sh-11}, in which  the
\emph{stronger} topology (so called the
$*$\nobreakdash-\hspace{0pt}strong operator topology) is used in
definition of convergence of a sequence of
measurements.\footnote{Note that this stronger version can not be
proved by means of the method used in \cite{Sh-11}.} Hence by means
of Corollary \ref{cont-cond+c} one can strengthen all the assertions
in Example 3 in \cite{Sh-11} by inserting the strong operator
topology in definition of convergence of a sequence of measurements,
which seems more natural in this context. \medskip

\section{Appendix}

\begin{lemma}\label{ohya_petz}
\emph{Let $\rho$ and $\sigma$ be states in $\mathfrak{S}(\mathcal{H})$ and
$C$ be an operator in $\,\mathfrak{T}_{+}(\mathcal{H})$. Then
$$
H(\lambda \rho+(1-\lambda)\sigma\| C)\ge \lambda H(\rho\| C)+(1-\lambda)
H(\sigma\| C)-h_2(\lambda),\;\; \forall\lambda\in [0,1],
$$
where $h_2(\lambda)=\eta(\lambda)+\eta(1-\lambda)$.}
\end{lemma}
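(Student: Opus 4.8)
The plan is to reduce the claimed inequality to two facts: an exact identity in which the operator $C$ disappears, and the elementary bound $H(\rho\|\mu)\le-\log\lambda$ for the mixture $\mu=\lambda\rho+(1-\lambda)\sigma$. Writing the relative entropy as $H(X\|C)=\Tr(X\log X-X\log C+C-X)$ and using that $X\mapsto\Tr X\log C$ together with the trace terms are affine in $(\rho,\sigma)$, one checks that for positive operators all the $\log C$- and trace-contributions cancel, leaving the purely entropic identity
$$
H(\mu\|C)-\lambda H(\rho\|C)-(1-\lambda)H(\sigma\|C)=-\lambda H(\rho\|\mu)-(1-\lambda)H(\sigma\|\mu).
$$
Thus $C$ is irrelevant and the whole problem is transferred to controlling the relative entropies of $\rho$ and $\sigma$ with respect to their mixture $\mu$.

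The second ingredient is the pair of bounds $H(\rho\|\mu)\le-\log\lambda$ and $H(\sigma\|\mu)\le-\log(1-\lambda)$. Since $\mu=\lambda\rho+(1-\lambda)\sigma\ge\lambda\rho$, we have $\rho\le\lambda^{-1}\mu$, and operator monotonicity of the logarithm gives $\log\rho-\log\mu\le-\log\lambda$ on $\supp\rho\subseteq\supp\mu$; taking the expectation in the state $\rho$ yields $H(\rho\|\mu)=\Tr\rho(\log\rho-\log\mu)\le-\log\lambda$, and symmetrically for $\sigma$. Granting the displayed identity, we then obtain $-\lambda H(\rho\|\mu)-(1-\lambda)H(\sigma\|\mu)\ge\lambda\log\lambda+(1-\lambda)\log(1-\lambda)=-h_2(\lambda)$, which is exactly the assertion.

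The one real obstacle is making both steps rigorous in infinite dimensions, where individual terms such as $\Tr\rho\log\rho$ and $\Tr\rho\log C$ may be $-\infty$ and the cancellation of the $\log C$-terms is a formally illegitimate ``$\infty-\infty$'' manipulation. I would circumvent this using Lemma 1: fix one increasing sequence $\{P_n\}$ of finite-rank projectors with $P_n\to I$ strongly, and compress, setting $\rho_n=P_n\rho P_n$, $\sigma_n=P_n\sigma P_n$, $C_n=P_nCP_n$, so that $\mu_n=P_n\mu P_n=\lambda\rho_n+(1-\lambda)\sigma_n$. On the range of $P_n$ every quantity is finite, the identity above holds verbatim for the positive operators $\rho_n,\sigma_n,C_n$, and since $\rho_n\le\lambda^{-1}\mu_n$ the monotonicity estimate reads $H(\rho_n\|\mu_n)\le-\log\lambda\,\Tr\rho_n+(1-\lambda)(\Tr\sigma_n-\Tr\rho_n)$, the last term reflecting the non-normalization of the compressions. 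Combining the identity with these bounds at level $n$ and letting $n\to\infty$, Lemma 1 guarantees $H(\rho_n\|C_n)\uparrow H(\rho\|C)$, $H(\sigma_n\|C_n)\uparrow H(\sigma\|C)$, $H(\mu_n\|C_n)\uparrow H(\mu\|C)$ and likewise for the relative entropies against $\mu$, while $\Tr\rho_n,\Tr\sigma_n\to1$ annihilates the spurious trace terms; the limit gives the required inequality. If $H(\rho\|C)$ or $H(\sigma\|C)$ equals $+\infty$ the same limiting identity forces $H(\mu\|C)=+\infty$ (the two $\mu$-relative terms staying bounded by $-\log\lambda$ and $-\log(1-\lambda)$), so the inequality holds trivially; and the cases $\lambda\in\{0,1\}$ are immediate since $h_2(0)=h_2(1)=0$. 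Finiteness of $H(\mu\|C)$ in the remaining case also follows independently from joint convexity of the relative entropy, which provides a useful consistency check.
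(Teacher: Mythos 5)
Your proof is correct, and it rests on the same regularization backbone as the paper's — compress $\rho$, $\sigma$, $C$ by a common increasing sequence of finite-rank projectors, do the real work at the finite-rank level where every term is finite, and let Lemma 1 drive the monotone passage to the limit — but the finite-dimensional core is organized differently. The paper expands each relative entropy as $-S(X)-\Tr X\log C_n+\Tr C_n-\Tr X$, cancels the terms linear in $X$, and then invokes the almost-concavity bound (\ref{H-fun-ineq}) on the entropy of the mixture, citing it as a known property of the von Neumann entropy. You instead prove the exact identity
$$
H(\mu\| C)-\lambda H(\rho\| C)-(1-\lambda)H(\sigma\| C)=-\lambda H(\rho\|\mu)-(1-\lambda)H(\sigma\|\mu),\qquad \mu=\lambda\rho+(1-\lambda)\sigma,
$$
(which indeed holds verbatim for non-normalized positive trace-class operators, since the $\log C$-terms, the $\Tr C$-terms and the trace-defect terms all cancel by linearity) and then bound $H(\rho\|\mu)\le-\log\lambda\,\Tr\rho_n+(1-\lambda)(\Tr\sigma_n-\Tr\rho_n)$ at the compressed level via $\mu_n\ge\lambda\rho_n$ and operator monotonicity of the logarithm. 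The two routes are essentially equivalent in content — your bound $\lambda H(\rho\|\mu)+(1-\lambda)H(\sigma\|\mu)\le h_2(\lambda)$ is precisely the standard proof of (\ref{H-fun-ineq}) for states — but yours is more self-contained, eliminates $C$ entirely at the outset, and yields slightly more: the concavity defect of $H(\cdot\| C)$ is identified exactly as $\lambda H(\rho\|\mu)+(1-\lambda)H(\sigma\|\mu)$ rather than merely bounded by $h_2(\lambda)$. Your handling of the degenerate cases ($\lambda\in\{0,1\}$, or one of $H(\rho\| C)$, $H(\sigma\| C)$ infinite, forcing $H(\mu\| C)=+\infty$ because the $\mu$-relative terms stay bounded) is also sound and is a point the paper glosses over.
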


\begin{proof} Let $\{P_n\}$ be an increasing
sequence of finite rank projectors strongly converging to the
identity operator. Then $A_n = P_n \rho P_n$, $B_n = P_n \sigma P_n$
and $C_n = P_n C P_n$ are finite rank operators for each  $n$ and
hence
\begin{multline*}
H(\lambda A_n+(1-\lambda) B_n \| C_n)=\Tr (\lambda A_n+(1-\lambda)B_n) (-\LOG
C_n)\\
\begin{aligned}
&-S(\lambda A_n+(1-\lambda)B_n)+\Tr C_n-\Tr(\lambda A_n+(1-\lambda)B_n)\\ &\ge
\lambda\Tr A_n (-\LOG C_n)+(1-\lambda)\Tr B_n (-\LOG C_n)+\Tr C_n-\lambda\Tr
A_n-(1-\lambda)\Tr B_n\\ &-\lambda S(A_n)-(1-\lambda)S(B_n)-\eta(\Tr(\lambda
A_n+(1-\lambda)B_n))+\lambda\eta(\Tr A_n)\\ & +(1-\lambda)\eta(\Tr B_n)-x_n
h_2(x_n^{-1}\lambda\Tr A_n)=\lambda H( A_n\| C_n)+(1-\lambda)H(B_n\| C_n)\\
&-\eta(\Tr(\lambda A_n+(1-\lambda)B_n))+\lambda\eta(\Tr A_n)+(1-\lambda)\eta(\Tr
B_n)- x_n h_2(x_n^{-1}\lambda\Tr A_n),
\end{aligned}
\end{multline*}
where $x_n=\Tr(\lambda A_n+(1-\lambda)B_n)$ and the inequality
$$
H(\lambda A_n+(1-\lambda)B_n)\le \lambda H(A_n)+(1-\lambda)H(B_n)+x_n
h_2(x_n^{-1}\lambda\Tr A_n),
$$
following from (\ref{H-fun-eq}) and (\ref{H-fun-ineq}) was used. By
Lemma 1 passing to the limit $n\to+\infty$ implies the desired
inequality.
\end{proof}

\begin{lemma}\label{rel_ent}
\emph{Let $\{ A_n \}$ be a sequence of operators in
$\mathfrak{T}_{+}(\mathcal{H})$ converging in the trace norm to an
operator $A_0$ such that  $A_n \le A_0$ for all $n$. Then
$$
\lim_{n\to +\infty} H(A_n\| B)=H(A_0\| B)\quad \text{for any
 operator}\; B\in
\mathfrak{T}_+(\mathcal{H}).
$$}
\end{lemma}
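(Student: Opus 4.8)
The plan is to prove the two one-sided inequalities $\liminf_{n}H(A_n\|B)\ge H(A_0\|B)$ and $\limsup_{n}H(A_n\|B)\le H(A_0\|B)$ separately. The first is immediate from the joint lower semicontinuity of the relative entropy (noted before Proposition~\ref{concave}): since $A_n\to A_0$ in the trace norm and $B$ is fixed, $\liminf_n H(A_n\|B)\ge H(A_0\|B)$. Hence the whole content is the reverse inequality, and it is here that the hypothesis $A_n\le A_0$ must enter. One may assume $H(A_0\|B)<+\infty$ (otherwise the lower bound already forces the limit to be $+\infty$); in particular $\supp A_0\subseteq\supp B$, so $\supp A_n\subseteq\supp A_0\subseteq\supp B$ for every $n$, and the truncated relative entropies used below are all meaningful.

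For the reverse inequality I would first dispose of the ``regular'' situation by writing, for $\supp A\subseteq\supp B$, $H(A\|B)=\Tr(A\log A)-\Tr(A\log B)+\Tr B-\Tr A$. The term $\Tr B-\Tr A_n$ is continuous. Putting $c=-\log\|B\|$ and $M=-\log B-c\,P_B\ge0$ on $\supp B$, the cross term splits as $\Tr(A_n(-\log B))=\Tr(A_nM)+c\Tr A_n$; the domination $A_n\le A_0$ gives $\Tr(A_nM)\le\Tr(A_0M)$, while $A\mapsto\Tr(AM)$ is lower semicontinuous (a supremum of the continuous functionals $\Tr(A\,MP_j)$ over the spectral projectors $P_j$ of $M$), whence $\Tr(A_nM)\to\Tr(A_0M)$. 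Finally $\Tr(A_n\log A_n)=-S(A_n)$ obeys $\limsup_n(-S(A_n))\le-S(A_0)$ by lower semicontinuity of $S$. Adding the three contributions yields $\limsup_nH(A_n\|B)\le H(A_0\|B)$, \emph{provided} these limits may legitimately be added, i.e. provided $\Tr(A_0M)<+\infty$ (equivalently, given $H(A_0\|B)<\infty$, provided $S(A_0)<+\infty$).

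The main obstacle is precisely the exceptional case $H(A_0\|B)<+\infty$ with both $S(A_0)=+\infty$ and $\Tr(A_0(-\log B))=+\infty$: the entropy and the cross term then diverge and cancel only when paired eigenvector by eigenvector, so the term-by-term passage produces an illegitimate ``$\infty-\infty$''. To handle all cases at once I would keep the relative entropy \emph{paired} and truncate by means of Lemma~1. Let $Q_k$ be the spectral projector of $B$ onto its eigenvalues $\ge1/k$; since $B$ is trace class each $Q_k$ has finite rank, commutes with $B$, and $\log(Q_kBQ_k)$ is bounded on $\Ran Q_k$. For fixed $k$ the operators $Q_kA_nQ_k$ live in the finite-dimensional space $\Ran Q_k$ and converge in trace norm to $Q_kA_0Q_k$, so finite-dimensional continuity gives $\lim_nH(Q_kA_nQ_k\|Q_kBQ_k)=H(Q_kA_0Q_k\|Q_kBQ_k)$, and by Lemma~1 this right-hand side increases to $H(A_0\|B)$ as $k\to+\infty$; no $\infty-\infty$ can occur at any finite $k$.

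It then remains to control the nonnegative remainders $R_k(A_n)=H(A_n\|B)-H(Q_kA_nQ_k\|Q_kBQ_k)$ so that $\lim_k\limsup_nR_k(A_n)=0$, for then $\limsup_nH(A_n\|B)\le H(Q_kA_0Q_k\|Q_kBQ_k)+\limsup_nR_k(A_n)\to H(A_0\|B)$. Since the pinching $\mathcal C_k(\cdot)=Q_k(\cdot)Q_k+Q_k^{\perp}(\cdot)Q_k^{\perp}$ (with $Q_k^{\perp}=P_B-Q_k$) fixes $B$, additivity of the relative entropy over the two orthogonal blocks together with its behaviour under this conditional expectation yields the paired, $\infty-\infty$-free identity $R_k(A)=H(A\|\mathcal C_kA)+H(Q_k^{\perp}AQ_k^{\perp}\|Q_k^{\perp}BQ_k^{\perp})$. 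The hard part is to majorize these two nonnegative relative entropies, for $A=A_n$, uniformly in $n$ by quantities tending to $0$ as $k\to+\infty$; here the operator bound $A_n\le A_0$ is decisive, playing exactly the role of the summable majorant $b_i+g_i(a_i)$ (with $g_i(x)=x\log(x/b_i)+b_i-x$) in the commutative dominated-convergence proof of the lemma for diagonal $A,B$. Once this uniform smallness is secured, combining it with the finite-dimensional limit of the preceding paragraph gives $\limsup_nH(A_n\|B)\le H(A_0\|B)$, which with the lower bound establishes the lemma.
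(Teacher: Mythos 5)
Your proposal is not a complete proof: it correctly disposes of the lower bound via lower semicontinuity and correctly isolates the genuine difficulty (the case $S(A_0)=\Tr A_0(-\log B)=+\infty$ with $H(A_0\|B)<+\infty$), but at the decisive point you write that ``the hard part is to majorize these two nonnegative relative entropies, for $A=A_n$, uniformly in $n$'' and then simply assert that $A_n\le A_0$ is ``decisive'' by analogy with a commutative dominated-convergence argument. That step is left entirely unproven, and the analogy is not self-evident: the relative entropy is not monotone in its first argument with respect to the operator order, so the operator inequality $Q_k^{\perp}A_nQ_k^{\perp}\le Q_k^{\perp}A_0Q_k^{\perp}$ does not by itself bound $H(Q_k^{\perp}A_nQ_k^{\perp}\|Q_k^{\perp}BQ_k^{\perp})$ by the corresponding quantity for $A_0$, nor is it clear how to control $H(A_n\|\mathcal{C}_kA_n)$ uniformly in $n$. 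Until that uniform estimate is supplied, the argument establishes only the half that was already immediate.

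The paper closes this gap by a different and much shorter route, which is the idea your proposal is missing. Writing $A_0=\lambda_n\rho_n+(1-\lambda_n)\sigma_n$ with $\lambda_n=\Tr A_n$, $\rho_n=A_n/\Tr A_n$ and $\sigma_n=(A_0-A_n)/(1-\lambda_n)$ (this is where $A_n\le A_0$ enters), it invokes the approximate concavity of $\rho\mapsto H(\rho\|C)$ (Lemma~\ref{ohya_petz}),
$$
H(A_0\| B)\ \ge\ \lambda_n H(\rho_n\| B)+(1-\lambda_n)H(\sigma_n\| B)-h_2(\lambda_n),
$$
drops the nonnegative middle term, and uses the scaling identity $\lambda_nH(\rho_n\|B)=H(A_n\|\lambda_nB)=H(A_n\|B)-\lambda_n\log\lambda_n-(1-\lambda_n)\Tr B$ to conclude $H(A_n\|B)\le H(A_0\|B)+o(1)$ directly. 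This keeps the relative entropy paired throughout, so no $\infty-\infty$ ever arises and no truncation or uniform tail control is needed. If you want to salvage your truncation scheme, you would have to prove the uniform smallness of $R_k(A_n)$ from scratch; the superadditivity-type inequality of Lemma~\ref{ohya_petz} is precisely the tool that makes this unnecessary.
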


\begin{proof}
We can assume that $A_0$ is a state. It can be represented as
follows
$$
A_0=\lambda_n \rho_n+(1-\lambda_n) \sigma_n,
$$
where
$$
\lambda_n=\Tr A_n,\qquad \rho_n=\frac {A_n} {\Tr A_n},\qquad \sigma_n=\frac {A-A_n}
{1-\lambda_n}.
$$
By Lemma \ref{ohya_petz} and nonnegativity of the relative entropy
we have
$$
\begin{aligned}
H(A_0\| B)&\ge \lambda_n H(\rho_n\| B)+(1-\lambda_n) H(\sigma_n\|
B)-h_2(\lambda_n)\\ &\ge H(A_n\| \lambda_n B)- h_2(\lambda_n)\\ &=H(A_n\|
B)-\Tr B(1-\lambda_n)-\lambda_n\LOG(\lambda_n)- h_2(\lambda_n),
\end{aligned}
$$
and hence  $\limsup_{n\to +\infty} H(A_n\| B)\le H(A_0\|
B)$. By lower semicontinuity of the  relative entropy this implies
the assertion of the lemma.
\end{proof}

\noindent\textit{Proof of Lemma~\ref{tobr}.} Let $T=\mathcal{D}\circ\Phi$. Consider
the set of conditions
\begin{align}
T(\ketbra{\psi}{\psi})&= \ketbra{\psi}{\psi},\quad \forall \ket{\psi}\in
\supp\rho,\label{A1}\\ T(\ketbra{\psi}{\phi})&= \ketbra{\psi}{\phi},\quad \forall
\ket{\psi},\ket{\phi}\in \supp\rho,\label{A2}\\ T(\ketbra{e_i}{e_j})&=
\ketbra{e_i}{e_j},\quad \forall i,j,\label{A3}
\end{align}
where $\ket{e_i}$ is the set of eigenvectors of the state $\rho$
corresponding to nonzero eigenvalues. Then Definition \ref{dobr}
$\Leftrightarrow$ (\ref{A1}) follows from the spectral
representation, (\ref{A1}) $\Leftrightarrow$ (\ref{A2}) follows from
the polarization identity, (\ref{A2}) $\Leftrightarrow$ (\ref{A3})
is obvious, (\ref{A3}) $\Leftrightarrow$ (\ref{eobr}) follows from
formula (\ref{purif})).

\bigskip
The authors are grateful to the participants of the seminar
''Quantum probability, statistics, information'' (MIAN) for the
useful discussion. The authors are also grateful to A.A. Kuznetsova
for the discussion and the help in preparing the manuscript.

\end{document}